\theoremstyle{definition}
\newtheorem{theorem}{Theorem}
\newtheorem{lemma}[theorem]{Lemma}
\newtheorem{definition}[theorem]{Definition}
\newtheorem{observation}[theorem]{Observation}
\newtheorem{corollary}[theorem]{Corollary}
\newtheorem{example}[theorem]{Example}
\newtheorem{mclaim}[theorem]{Claim}
\theoremstyle{remark}
\begin{document}
\title{Almost Envy-freeness, Envy-rank, and Nash Social Welfare Matchings}

\author[1]{Alireza Farhadi}
\author[1]{MohammadTaghi Hajiaghayi}
\author[2]{Mohamad Latifian}
\author[3]{Masoud Seddighin}
\author[1]{Hadi Yami}
\affil[1]{University of Maryland, College Park}
\affil[2]{Sharif University of Technology}
\affil[3]{Institute for Research in Fundamental Sciences (IPM)}

\maketitle              

\newcommand*{\QEDA}{\hfill\ensuremath{\blacksquare}}%
\newcommand*{\QEDB}{\hfill\ensuremath{\square}}%

\newcommand{\OPT}{\ensuremath{\mathsf{OPT}}\xspace}
\newcommand{\PC}{\textsf{P}\xspace}
\newcommand{\NP}{\textsf{NP}\xspace}

\newcommand{\agents}{\mathcal{N}}
\newcommand{\items}{\mathcal{M}}
\newcommand{\ite}{b}
\newcommand{\allocation}{\mathcal{A}}
\newcommand{\valu}{v}
\newcommand{\valus}{V}
\newcommand{\EF}{\mathsf{EF}}
\newcommand{\EFX}{\mathsf{EFX}}
\newcommand{\EFO}{\mathsf{EF1}}
\newcommand{\EFR}{\mathsf{EFR}}
\newcommand{\NSW}{\mathsf{NSW}}
\newcommand{\ef}{\mathsf{EF}}
\newcommand{\efx}{\mathsf{EFX}}
\newcommand{\efo}{\mathsf{EF1}}
\newcommand{\efr}{\mathsf{EFR}}
\newcommand{\nsw}{\mathsf{NSW}}
\newcommand{\group}{\mathsf{G}}
\newcommand{\distribution}{\mathsf{D}}
\newcommand{\MMS}{\mathsf{MMS}}

\newcommand{\agent}{a}

\newcommand{\prob}[1]{\textsc{#1}\xspace}
\newcommand{\algo}[1]{\textsc{#1}\xspace}
\newcommand{\length}{\text{length}\xspace}
\newcommand{\Mandate}{\text{Mandate}}
\newcommand{\len}{\length}
\newcommand{\weight}{\text{weight}\xspace}
\newcommand{\weightenclosed}{\text{weight-enclosed}}
\newcommand{\cost}{\text{cost}}
\newcommand{\g}{\textsf{sol}}
\newcommand{\h}{\textsf{except}}
\newcommand{\parent}{\ensuremath{p}}
\newcommand{\mincut}{\textsf{mincut}\xspace}
\newcommand{\dist}{\text{\rm dist}}
\newcommand{\B}{\ensuremath{\mathcal{B}}}
\newcommand{\set}[1]{\{#1\}}
\newcommand{\calF}{\tt Enclosed}
\newcommand{\structured}{structured}
\newcommand{\nearOPT}{\mbox{nearOPT}}
\newcommand{\OPTsolution}{\ensuremath{\hat L}}
\newcommand{\F}{\ensuremath{\mathcal{F}}}
\newcommand{\pmwcut}{\prob{Multiway Cut}}
\newcommand{\pstcut}{\prob{Minimum $s\text{-}t$ Cut}}
\newcommand{\nph}{\textsf{NP-hard}\xspace}
\newcommand{\apxh}{\textsf{APX-hard}\xspace}
\newcommand{\ptas}{\textsf{PTAS}\xspace}
\newcommand{\psolve}{\textsf{P}\xspace}
\newcommand{\npsolve}{\textsf{NP}\xspace}
\newcommand{\eps}{\ensuremath{\epsilon}}
\newcommand{\opt}{\OPT}
\newcommand{\dt}[1]{\emph{#1}}
\newcommand{\MG}{M}
\newcommand{\PRG}{PCG}
\newcommand{\brickdecomp}{\algo{Brick-Decomposition}}
\newcommand{\mwcutspan}{\algo{MainSpanner}}
\newcommand{\mwcutskeleton}{\algo{Skeleton}}
\newcommand{\mwcutspanner}{\algo{Spanner}}
\newcommand{\findear}{\algo{Find\-Ear}}
\newcommand{\parkone}{\algo{Find\-Short\-Cycle}}
\newcommand{\parktwo}{\algo{Find\-Short\-Cycle\-With\-Range}}
\newcommand{\parkthree}{\algo{Find\-Short\-Separating\-Cycle}}
\newcommand{\encl}{\ensuremath{\mathsf{enclosed}}}
\newcommand{\union}{\ensuremath{\cup}}
\newcommand{\intersect}{\ensuremath{\cap}}
\newcommand{\pclose}{p_{close}}
\newcommand{\pfar}{p_{far}}
\newcommand{\E}{\mathop{\mathlarger{\mathbb{E}}}}
\newcommand{\of}{\bar{s}}
\newcommand{\on}{s}
\newcommand{\Mod}{\ \mathrm{mod}\ }

\definecolor{mygreen}{RGB}{20,140,80}
\definecolor{mylightgray}{RGB}{230,230,230}

\definecolor{mygreen}{RGB}{20,140,80}
\definecolor{mydarkgray}{gray}{0.15} 
\definecolor{oceanblue}{HTML}{2c55c2}
\hypersetup{
     colorlinks=true,
     citecolor= mygreen,
     linkcolor= black
}

\newcommand{\ca}[1] {\textcolor{oceanblue}{ #1 }}

\newcommand{\citeboth}[1]{\hypersetup{citecolor=mydarkgray}\citeauthor{#1}\hypersetup{citecolor=mygreen} \cite{#1}}

\newcommand\mycommfont[1]{\textcolor{oceanblue}{#1}}
\SetCommentSty{mycommfont}



\begin{abstract}
Envy-free up to one good ($\efo$) and envy-free up to any good ($\efx$) are two well-known extensions of envy-freeness for the case of indivisible items. It is shown that $\efo$ can always be guaranteed for agents with subadditive valuations \cite{lipton2004approximately}. In sharp contrast, it is unknown whether or not an $\efx$ allocation always exists, even for four agents and additive valuations. In addition, the best approximation guarantee for $\efx$ is $(\phi -1) \simeq 0.61$ by Amanitidis et al. \cite{amanatidis2019multiple}.

In order to find a middle ground to bridge this gap, in this paper 
we  suggest another fairness criterion, namely \emph{envy-freeness up to a random good} or $\efr$, which is weaker than $\efx$, yet stronger than $\ef1$. For this notion, we provide a polynomial-time $0.73$-approximation allocation algorithm. 
For our algorithm we introduce Nash Social Welfare Matching which makes a connection between Nash Social Welfare and envy freeness. We believe Nash Social Welfare Matching will find its applications in future work.

\end{abstract}

\section{Introduction}
Fair division is a fundamental and interdisciplinary problem that has been extensively studied in economics, mathematics, political science, and computer science
\cite{farhadi2019fair,seddighin2019externalities,ghodsi2018fair,dickerson2014computational,aziz2018fair,kurokawa2018fair,plaut2018almost,lipton2004approximately,nash1950bargaining,barman2017approximation,chaudhury2019little,amanatidis2019multiple,caragiannis2019unreasonable}. Generally, the goal is to find an allocation of a resource to $n$ agents, which is agreeable to all the agents according to their preferences. The first formal treatment of this problem was in 1948 by 
Steinhaus  \cite{Steinhaus:first}. Following his work,  a vast literature has been developed and several notions for measuring fairness have been suggested \cite{Steinhaus:first,Foley:first,Budish:first,lipton2004approximately,caragiannis2019unreasonable}.  One of the most prominent and well-established fairness notions, introduced by Foley \cite{Foley:first}, is envy-freeness,  which requires that each agent prefers his  share over that of any other agent.

Traditionally, envy-freeness has been studied for both divisible and indivisible resources. When the resource is a single heterogeneous divisible item (i.e, can be fractionally allocated), envy-freeness admits strong theoretical guarantees. For example, it is shown that allocations exist that allocate the entire resource, and are both envy-free and Pareto efficient\footnote{
	An allocation is Pareto efficient if it is not possible to reallocate the resources such that at least one agent is better off without making any other person worse off.
} and allocate each agent a contiguous piece of the resource \cite{varian1973equity}. Apart from mere existence, there are algorithms that find an envy-free allocation for arbitrary number of agents \cite{aziz2016discrete,brams,dehghani2018envy}. However, beyond divisibility, when dealing with a set of indivisible goods, envy-freeness is too strong to be attained; for example, for two agents and a single indivisible good, the agent that receives no good envies another party. Therefore, several relaxations of envy-freeness are introduced for the case of indivisible items  \cite{lipton2004approximately,Budish:first,caragiannis2019unreasonable}. One of these relaxations,  suggested by Budish \cite{Budish:first}, is \textit{envy-freeness up to one good} ($\EFO$)\footnote{
	It is worth to mention that before the work of Budish \cite{Budish:first}  $\EFO$ was implicitly addressed by Lipton et. al \cite{lipton2004approximately}. 
}. An allocation of indivisible goods is $\EFO$ if any possible envy of an agent for the share of another agent can be resolved by removing some good from the envied share. In contrast to envy-freeness, $\EFO$ allocation always exists. Indeed, a simple round-robin algorithm always guarantees $\EFO$ for additive valuations, and a standard envy-graph based allocation guarantees $\EF1$ for more general (sub-additive) valuations. Besides, it is shown that any Nash welfare maximizing allocation (allocation that maximizes the product of the agents' utilities) is both Pareto efficient and $\EFO$.

Recently, Caragiannis \textit{et al.} \cite{caragiannis2019unreasonable} suggested
another intriguing relaxation of envy-freeness, namely \textit{envy-free up to any good}  ($\EFX$), which attracted a lot of attention. An allocation is said to be $\EFX$, if no agent envies another agent after the removal of any item from the other agent's bundle. Theoretically, this notion is strictly stronger than $\EF1$ and is strictly weaker than $\EF$. In contrast to $\EF1$,  questions related to $\EFX$ notion is relatively unexplored. As an example, despite significant effort \cite{caragiannis2019unreasonable}, the existence of such allocations is still unknown. The most impressive breakthrough in this  area is the recent work of Chaudhury, Garg, and Mehlhorn \cite{chaudhury2020efx}, which shows that for the case of $3$ agents with additive valuations $\EFX$ allocation always exists.  Furthermore, unlike $\efo$, Nash social welfare maximizing allocations are not necessarily $\efx$ \cite{caragiannis2019unreasonable}. 

Given this impenetrability of $\efx$, a growing strand of research started considering its relaxations. For example, Plaut and Roughgarden \cite{plaut2018almost}, consider an approximate version of $\efx$\footnote{
	An allocation is $\alpha$-approximate $\EFX$, if for every pair of agents $i$ and $j$, agent $i$ believes that the share allocated to him is worth at least $\alpha$ fraction of the share allocated to agent $j$, after removal of agent $j$'s least valued item (according to agent $i$'s preference). 
}
and provide a $1/2$ approximation solution for agents with sub-additive valuation functions. For additive valuations, this factor is recently improved to $0.618$ by Amanatidis \textit{et al.} \cite{amanatidis2019multiple}.  Another interesting relaxation is \emph{$\efx$-with-charity}. Such allocations donate a bundle of  items to charity and guarantee $\efx$ for the rest of the items. The less valuable the donated items are, the more desirable the allocation is.  Caragiannis \textit{et al.}\cite{caragiannis2019envy}  show that there always exists an $\EFX$-with-charity allocation where every agent receives half the value of his bundle in the optimal Nash social welfare allocation. Recently, Chaudhury \textit{et al.} \cite{chaudhury2019little} have proposed an $\EFX$-with-charity allocation such that no agent values the donated items more than his bundle and the number of donated items is less than the number of agents. 

Considering the huge discrepancy between $\efx$ and $\efo$, in this paper we wish to  find a middle ground to bridge this gap. 
We therefore suggest another fairness criterion, namely \emph{envy-freeness up to a random item} or $\efr$, which is weaker than $\efx$, yet stronger than $\ef1$. For this notion, we provide a polynomial time 0.73-approximation algorithm, i.e., an algorithm that constructs 0.73-$\efr$ allocations in polynomial time. Our allocation method is based on a special type of matching, namely Nash Social Welfare Matching. In Section \ref{results}, we briefly discuss our techniques to obtain these results.

\subsection{Our Results and Techniques} \label{results}


\paragraph{\textbf{Envy-freeness up to a random item.}}
We suggest a new fairness notion, namely \emph{evny-free up to a random good} ($\efr$). Roughly speaking, in an $\efr$ allocation, no agent $i$ envies another agent $j$ (in expectation), if we remove a random good from the bundle of agent $j$. In other words, the expected value of agent $i$ for the bundle allocated to agent $j$, after removing a random item from it is at most as much as the value of his own bundle. 
Obviously,
$\efr$ is a weaker notion than $\efx$, yet stronger than $\efo$.

	The intuition behind $\EFR$ is to use randomness to reduce the \emph{severe impact of small items}. To see what we mean by this term, consider the following scenario: suppose that the value of agent $i$ for his share is $1000$. In addition, assume that the bundle allocated to an agent $j$ contains two items, each with value $600$ to agent $i$. Even though the allocation is currently $\efx$ with respect to agent $i$,  allocating even a very small item (say, with value close to $0$ to agent $i$) to agent $j$ violates $\efx$ condition for agent $i$. This is counter-intuitive in the sense that the last item allocated to agent $j$ was totally worthless to agent $i$. On the other hand, allocating any item with value less than $300$ to agent $j$ preserves $\efr$ condition for agent $i$. This property makes $\efr$ more flexible, especially when the number of items is not too much. 
On the other hand, as the number of items allocated to an agent grows larger, we expect $\efx$ and $\efr$ to be more and more aligned. 

Similar to $\efx$, we provide a counter example which shows that a Nash Social Welfare allocation is not necessarily $\efr$ (see Example \ref{nsw_example}). This separates $\efr$ from $\efo$ given the fact that a Nash Social welfare allocation is always $\EF1$\cite{caragiannis2019unreasonable}.  It is worth mentioning that  Caragiannis et al. \cite{caragiannis2019unreasonable}  presented an example to show that Nash Social welfare allocation is not necessarily $\EFX$. However, their example is still $\EFR$. The difference between these two examples can be seen as  an evidence for the distinction between $\EFR$ and $\EFX$. 

As noted, the best approximation guarantee for $\efx$ is $0.61$ by Amanatidis \textit{et al.} \cite{amanatidis2019multiple}. Since every $\efx$ allocation is also $\efr$, this result also provides a $0.61$-approximation algorithm for $\efr$. In this paper, we improve this ratio to $0.73$. 

\begin{restatable}{theorem}{efrthm}
	\label{main2}
	There exists an algorithm that finds a $0.73$-$\efr$ allocation. In addition, such an allocation can be found in polynomial time. 
\end{restatable}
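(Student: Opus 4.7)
The plan is to build the allocation in two phases. In the first phase, we compute a \emph{Nash Social Welfare Matching} (NSWM): a bijective assignment $m_1,\dots,m_n$ of distinct items to the $n$ agents that maximizes the product $\prod_i v_i(m_i)$. This matching can be obtained in polynomial time via maximum weight bipartite matching on logarithmic edge weights, and it gives each agent a valuable seed item $m_i$. In the second phase, the remaining items are distributed one by one using an envy-graph-style procedure, where the next item is always routed to an agent in a way that preserves a tailored $\efr$-flavored invariant. The final bundle $A_i$ consists of $m_i$ together with everything agent $i$ accumulates in the second phase.

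The core structural property I would extract is the \emph{swap inequality} built into the definition of an NSWM: for any two agents $i,j$, swapping $m_i$ and $m_j$ cannot improve the objective, so $v_i(m_i)\,v_j(m_j) \ge v_i(m_j)\,v_j(m_i)$. This says that the seed agent $i$ receives is competitive with whatever $j$ received: $v_i(m_i)$ cannot be arbitrarily smaller than $v_i(m_j)$. This provides the quantitative leverage that distinguishes the bound from the generic $\efx$ approach, and is where the Nash-welfare-matching ideas advertised in the introduction do their work.

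To verify the $0.73$ approximation I would proceed by case analysis on $|A_j|$. If $|A_j|=1$, $\efr$ holds trivially since removing the only item leaves an empty bundle. For $|A_j|\ge 2$ I split $v_i(A_j) = v_i(m_j) + v_i(A_j\setminus\{m_j\})$; the first summand is bounded via the swap inequality together with $v_i(m_i)\le v_i(A_i)$, while the second is controlled by the invariant maintained in phase two (for instance, that agent $i$ was always an eligible recipient for the items that ended up in $A_j\setminus\{m_j\}$, so her current bundle dominates them up to the least-valuable one). Combining the two bounds and comparing the result with $\tfrac{|A_j|-1}{|A_j|}v_i(A_j)$ should yield $v_i(A_i)\ge 0.73\cdot \tfrac{|A_j|-1}{|A_j|}v_i(A_j)$; the worst case is expected to occur at small $|A_j|$, where the $\efr$ slack $\tfrac{|A_j|-1}{|A_j|}$ is still sizable enough to absorb the NSWM-based estimate.

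The main obstacle will be squeezing a ratio strictly better than the $\phi-1\approx 0.618$ already available by treating the problem as $\efx$-approximation. All of the extra room comes from the fact that $\efr$ is strictly weaker than $\efx$ for small bundles, so the heart of the argument is ensuring that phase two never lets a bundle $A_j$ with $|A_j|\in\{2,3\}$ accumulate two items that are both highly valuable to the same envier $i$. Designing the second-phase allocation rule to preempt this situation, while still keeping the envy-graph update polynomial, is where the constant $0.73$ will be won or lost; polynomial runtime itself is immediate from polynomial-time bipartite matching and the standard envy-graph procedure.
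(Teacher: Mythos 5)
Your scaffolding coincides with the paper's (an $\nsw$ matching computed via maximum-weight bipartite matching on logarithmic weights, an envy-cycle-based completion phase, and the identity $\mathbb{E}_{b \sim \distribution_j}[\valu_i(\allocation_j\setminus\{b\})] = \frac{|\allocation_j|-1}{|\allocation_j|}\valu_i(\allocation_j)$), but the two load-bearing components of the argument are absent, and you explicitly defer exactly the step where, in your own words, the constant $0.73$ ``will be won or lost.'' First, the pairwise swap inequality $\valu_i(m_i)\valu_j(m_j)\ge \valu_i(m_j)\valu_j(m_i)$ is strictly weaker than what is needed. The paper's key quantity is the \emph{envy-rank} $r_i$: the maximum, over directed paths ending at $i$ in the envy-ratio graph, of the product of envy ratios along the path. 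Because an $\nsw$ matching admits no improving cycle, rotating bundles along such a path and handing an unallocated item $b$ to agent $i$ cannot increase the Nash product, which yields the bound $\valu_i(b)\le \valu_i(\allocation_i)/r_i$. Pairwise exchanges alone only give $\valu_i(b)\le \valu_i(\allocation_i)$, i.e., $\alpha'=1$, and then the envy-graph completion lemma guarantees no better than $1/(1+1)=1/2$ --- worse even than the $\phi-1\approx 0.618$ baseline you set out to beat. The transitive, path-based bound is precisely what protects heavily (transitively) envied agents: for them, every item left after the matching is automatically small.

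Second, your phase-two ``tailored invariant'' is a placeholder where the paper has a concrete mechanism, and without it your split $\valu_i(\allocation_j)=\valu_i(m_j)+\valu_i(\allocation_j\setminus\{m_j\})$ has nothing to bound the second summand with. The paper partitions agents by envy-rank into $\group_1$ ($r_i>\varphi$, with $\varphi=\sqrt{3}+1$), $\group_2$ ($2<r_i\le\varphi$), and $\group_3$ ($r_i\le 2$); then, in a topological order of the envy-graph, each $\group_3$ agent picks two additional items and each $\group_2$ agent one additional item. A case analysis (Claim \ref{clm:secstep}) shows the resulting partial allocation is $\efr$ for $\group_1$, $(3/4)$-$\efr$ for $\group_2$, and $(2/\varphi)$-$\efr$ for $\group_3$, while every remaining item is now worth at most $\valu_i(\allocation_i)/\varphi$ to agents in $\group_1$ and at most $\valu_i(\allocation_i)/3$ to the rest. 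Feeding $\alpha=2/\varphi$ and $\alpha'=1/\varphi$ into the completion lemma gives $\min\{2/\varphi,\ \varphi/(\varphi+1)\}=\sqrt{3}-1\approx 0.73$, and the threshold $\varphi=\sqrt{3}+1$ is chosen exactly to equalize these two terms. Your worry about a bundle of size $2$ or $3$ accumulating two items that one envier values highly is legitimate, but it is resolved only by this rank-based refinement rule together with the group-wise case analysis; as written, your proposal does not yet establish the theorem.
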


In order to prove Theorem \ref{main2}, we propose a three-step algorithm that finds a $0.73$-$\efr$ allocation in polynomial time.
Roughly speaking, in the first two steps, we allocate valuable (i.e., large) items while preserving the $0.73$-$\efr$ property. Next, we use an envy-cycle based procedure to allocate the rest of the items. Figure \ref{floch} shows a flowchart of our method.

\begin{figure}[H]
	\centering
	\includegraphics[width=0.5\linewidth]{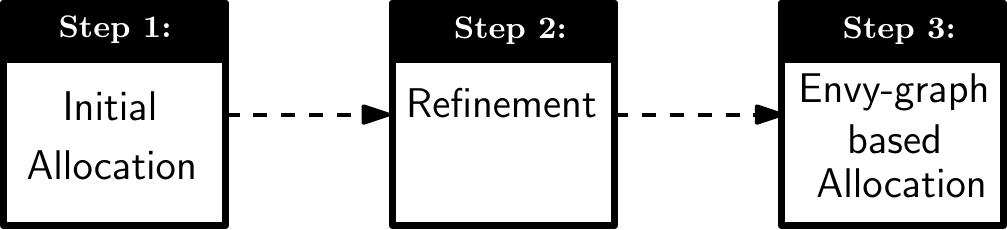}
	\caption{Flowchart of the $0.73$-$\efr$ allocation algorithm}
	\label{floch}
\end{figure}

The first challenge to address is the method by which we must allocate large items in the first step. Interestingly, we introduce a special type of matching allocation with intriguing properties which makes it ideal for our algorithm. We call such an allocation a \emph{Nash Social Welfare Matching}.

\paragraph{\textbf{Nash Social Welfare Matching.}}
In the first step of the algorithm, we allocate one item to each agent such that the product of the utilities of the agents is maximized. The interesting fact about this allocation is that, not only does this allocation allocates large items, but it also provides very useful information about the value of the rest of the items. In Section \ref{prelim} we broadly discuss such allocations and their properties. However, to shed light on their usefulness, assume that after a Nash Social Welfare Matching, agent $i$ envies agent $j$ with a ratio $\alpha>1$, meaning that he thinks the value of the good allocated to agent $j$ is $\alpha$ times more than the value of his item. In that case, we can immediately conclude that the item allocated to agent $j$ is $\alpha$ times more valuable to him (agent $j$) than any remaining item; otherwise, we could improve the utility product by allocating the most valuable  remaining item to agent $j$ and giving his former item to agent $i$ (and of course, freeing agent $i$'s former item). In addition, we can express the same proposition for the value of the item allocated to agent $i$ for agent $j$: the value of this item for agent $j$ is at most $1/\alpha$ of the item allocated to agent $j$. The above statement can be generalized to the arguments that include more than two agents. 
With this aim, we introduce several new concepts, including \emph{envy-ratio graph} (a complete weighted graph that  represents the envy-ratios between agents), \emph{improving cycles}, and \emph{envy-rank}.

It is worth mentioning that the main challenge in many fair allocation problems for different fairness criteria (e.g.,  $\MMS$, $\EFX$) is allocating valuable items. 
The structure of such matchings makes them ideal for allocating these items. We strongly believe that using Nash Social Welfare matching is not only useful for our algorithm, but can also be seen as a strong tool in the way of finding fair allocations related to the other fairness notions, especially maximin-share.  In Section \ref{sec:efx} we show how to use $\NSW$ mathcing to obtain a simple algorithm with the approximation ratio of $(\phi-1) \simeq 0.61$ for $\efx$. The approximation ratio of our algorithm matches the state-of-the-art $(\phi-1)$ approximation result by Amanitidis et al. \cite{Amanatidis2018comparison}.

\subsection{Related work}
Fair allocation of a divisible resource (known as \emph{cake cutting}) was first introduced by Steinhaus\cite{Steinhaus:first} in 1948, and since then has been the subject of intensive studies. We refer the reader to \cite{brams1996fair} and \cite{robertson1998cake} for an overview of different fairness notions and their related results. Proportionality and Envy-freeness are among the most well-established notions for cake cutting. As mentioned, the literature of cake cutting admits strong positive results for these two notions (see \cite{Steinhaus:first} for  details). 

Since neither $\ef$ nor proportionality or any approximation of these notions can be guaranteed  for indivisible goods, several relaxations are introduced for these two notions in the past decade. These relaxations include $\efo$ and $\efx$ for envy-freeness and maximin-share \cite{Budish:first} for proportionality. 
Nash Social Welfare ($\nsw$) is also another important notion in allocation of indivisible goods which is somewhat a trade off between fairness and optimality.

Apart from the results mentioned in the introduction for $\efx$ and $\efo$, there are other studies related to these notions\cite{barman2018finding,barman2018NSW,barman2018greedy,caragiannis2019envy,caragiannis2019unreasonable,chaudhury2018fair}. In particular, Barman \textit{et al.} \cite{barman2018finding}
propose a pseudo-polynomial time algorithm that finds an $\efo$ and pareto efficient allocation. They also show that any $\efo$ and pareto efficient allocation approximates Nash Social Welfare with a factor of $1.45$. 
In contrast to $\EFO$, our knowledge of $\efx$ and $\nsw$ beyond additive valuations is  limited. For $\efx$, the only positive results for general valuations is the work of Plaut and Roughgarden \cite{plaut2018almost} which provides a $1/2$-$\efx$ allocation. For $\nsw$, Grag et al. \cite{garg2020approximating} prove an $O(n\log n)$ approximation  guarantee for submodular valuations. Recently this factor is improved to $O(n)$\cite{chaudhury2020fair}.


Maximin-share is one of the most well-studied notions in the recent years. 
In a pioneering study, Kurokawa \textit{et al.} \cite{kurokawa2018fair} provide  an approximation algorithm with the factor of $2/3$ for maximin-share, which is improved to $3/4$ by Ghodsi et al \cite{ghodsi2018fair}.
Beyond additivity, Barman \textit{et al.} \cite{barman2017approximation} show that a simple round robin algorihtm can guarantee  $1/10$-$\MMS$ for submodular valuations, and Ghodsi \textit{et al.}  provide approximation guarantees for submodular ($1/3$), XOS ($1/5$) and subadditive ($1/\log n$) valuations.
In addition, several notions are ramified from maximin-share, including weighted maximin-share ($\textsf{WMMS}$) \cite{farhadi2019fair}, pairwise maximin-share ($\textsf{PMMS}$)\cite{caragiannis2019unreasonable}, and groupwise maximin-share ($\textsf{GMMS}$)\cite{barman2018groupwise}. Several studies consider the relation between these notions  and seek to find an allocation that guarantees a subset of them simultaneously. For example, Amanatidis \textit{et al.} \cite{Amanatidis2018comparison} investigate the connections between $\efo$, $\EFX$, maximin share, and pairwise maximin share.  They show that any $\EF1$ allocation is also a $1/n$-$\MMS$ and a $1/2$-$\textsf{P}\MMS$ allocation. They also prove that any $\EFX$ allocation is a $4/7$-$\MMS$ and a $2/3$-$\textsf{PMMS}$ allocation.

\section{Preliminaries and Basic Observations}\label{prelim}

\paragraph{\textbf{Fair allocation problem}.} An instance of fair allocation problem consists of a set of $n$ agents, a set $\items$ of $m$ goods, and a valuation profile $\valus = \{\valu_1,\valu_2,\ldots,\valu_n\}$. Each $v_i$ is a function of the  form $2^{\items} \rightarrow \mathbb R_{\geq 0}$ which specifies the preferences of agent $i \in [n]$ over the goods. Throughout the paper, we assume that a valuation function $v_i$ satisfies the following conditions.
\begin{itemize}
	\item \textbf{Normalization}: $\valu_i(\emptyset) = 0$.
	\item \textbf{Monotonicity}: $\valu_i(S) \leq \valu_i(T)$ whenever $S \subseteq T$.
	\item \textbf{Additivity}: $\valu_i(S) = \sum_{\ite \in S} \valu_i(\{\ite\})$.
\end{itemize}

An allocation of a set $S$ of goods is an $n$-partition  $\allocation=\langle \allocation_1,\allocation_2,\ldots,\allocation_n \rangle$ of $S$, where $\allocation_i$ is the bundle allocated to agent $i$. Allocation is complete, if $S = \items$ and is partial otherwise. Since we are interested in the allocations that allocate the whole set of items, the final allocation must be complete. 

\paragraph{\textbf{Fairness critera}.} Given an instance of fair division problem and an allocation $\allocation$, an agent $i$ envies another agent $j$, if he strictly prefers $\allocation_j$ over his bundle $\allocation_i$. An allocation is then said to be \emph{envy-free} ($\EF$), if no agent envies another, i.e., for every pair $i,j \in [n]$ of agents we have $\valu_i(\allocation_i) \geq \valu_i(\allocation_j)$. As mentioned, envy-freeness is too strong to be guaranteed in an allocation of indivisible items. Therefore, two relaxations of this notion are introduced, namely \emph{envy-free up to one good} ($\EF1$) and \emph{envy-free up to any good} ($\EFX$).

\begin{definition}
	An allocation $\allocation$ is called
	\begin{itemize}
		\item envy-free up to one good $(\EF1)$ if for all $i,j$ we have
		$\valu_i(\allocation_i) \geq \min_{\ite \in \allocation_j} \valu_i(\allocation_j \setminus \{\ite\})$,
		\item envy-free up to any good $(\EFX)$ if for all $i,j$ we have 
				$\valu_i(\allocation_i) \geq \max_{b \in \allocation_j} \valu_i(\allocation_j \setminus \{\ite\})$.
	\end{itemize} 
\end{definition}

Even though these two notions seem to be somewhat related, there is a huge discrepancy between the current results obtained for them. 
It is shown that 
even for instances with general valuations, an $\EF1$ allocation always exists, and can be computed in polynomial time \cite{lipton2004approximately}. In contrast, whether or not an $\EFX$ allocation always exists is still open, even for additive valuations. 

In this paper, we introduce another relaxation of envy-freeness, namely \emph{envy-free up to a random good}. Let $\distribution_j$ be a uniform distribution over the items of $\allocation_j$ that selects each item with probability $1/|\allocation_j|$.

\begin{definition}
	Allocation $\allocation$ is envy-free up to a random good ($\EFR$) if for all $i,j$ we have
	$$\valu_i(\allocation_i) \geq \mathop{\mathlarger{\mathbb{E}}}_{\ite \sim \distribution_j} \bigg[\valu_i(\allocation_j \setminus \{\ite\})\bigg] \,.$$
\end{definition}
Clearly, $\EFR$  lies in between $\EFX$ and $\EFO$: $\EFX$ is a stronger notion of fairness than $\EFR$, and $\EFR$ is stronger than $\EF1$.
 In Example \ref{nsw_example}, we show one  structural difference between $\efo$ and $\efr$: in contrast to $\efo$, $\efr$ is not implied by an allocation that maximizes Nash social welfare.  

\begin{example}
		\label{nsw_example}
			\begin{figure}[h]
			\centering			\setlength\extrarowheight{8pt}
			\begin{tabular}{c|p{0.5cm} p{0.5cm} p{0.5cm} p{0.5cm} c}
				&$1$&$2$&$3$&$4$&5\\
				\hline
				$\valu_1$&3&3&1&1&1\\ 
				$\valu_2$&5&5&1&4&3
				
			\end{tabular}
			\caption{Agents' valuations over items}
			\label{nsw_example_table}
		\end{figure}
Consider an instance of the fair allocation problem with 5 items, and 2 agents with the valuations represented in Figure \ref{nsw_example_table}. The unique allocation that maximizes the $\nsw$ allocates the first 3 items to the first agent, and the other 2 items to the second agent. Let $\allocation$ be  this allocation. Since there are 3 items in the first agent's bundle, we have
\begin{align*}
\mathop{\mathlarger{\mathbb{E}}}_{\ite \sim \distribution_1} \bigg[\valu_2(\allocation_1 \setminus \{\ite\})\bigg] \, &= 
\frac{1}{3} \cdot \big (\valu_2(\allocation_1 \setminus \{1\}) + \valu_2(\allocation_1 \setminus \{2\}) + \valu_2(\allocation_1 \setminus \{3\}) \big) \\
&= \frac{22}{3} \geq \valu_2(\allocation_2) = 7 \,.
\end{align*} 
Hence, this allocation is not  $\efr$.

\end{example}

Finally, approximate versions of $\efx$ and $\efr$ are defined as follows.

\begin{definition}
	For a constant $c \le 1$, an allocation $\allocation$ is called
	\begin{itemize}
		\item $c$-approximate envy-free up to any good ($c$-$\EFX$), if for all $i,j$ we have
		$$\valu_i(\allocation_i) \geq c \cdot \max_{\ite \in \allocation_j} \valu_i(\allocation_j \setminus \{\ite\}) \,,$$
		\item $c$-approximate envy-free up to a random good ($c$-$\EFR$) if for all $i,j$ we have 
		$$\valu_i(\allocation_i) \geq c \cdot \mathop{\mathlarger{\mathbb{E}}}_{\ite \sim \distribution_j} \bigg[\valu_i(\allocation_j \setminus \{\ite\})\bigg] \,.$$
	\end{itemize} 
	Note that Example \ref{nsw_example} also shows that the maximum $\nsw$ allocation does not guarantee better than $\frac{21}{22}$ approximation of $\efr$.
\end{definition}
\paragraph{\textbf{Envy-ratio Graph}.}
Envy-ratio graph is in fact a generalization of envy-graph introduced by Lipton et al. \cite{lipton2004approximately}. Suppose that at some stage of our algorithm we have a partial allocation $\allocation$. We define a graph called \emph{envy-ratio graph}  to be a complete weighted digraph with the following construction: each vertex corresponds to an agent, and for each ordered pair $(i,j)$, there is a directed edge from vertex $i$ to vertex $j$ with the weight $w_{i,j} = \valu_i(\allocation_j) / \valu_i(\allocation_i)$. 

Assuming each agent has a non-zero value for each good, for every $i,j$ we have $w_{i,j} \in [0,\infty)$. Note that  $w_{i,j} \le 1$ implies that agent $i$ does not envy agent $j$, whereas $w_{i,j}>1$ indicates agent $i$ envies agent $j$. The higher the value of $w_{i,j}$ is, the more envious agent $i$ is to the bundle of agent $j$. Indeed, the well-known envy-graph is a subgraph of envy-ratio graph containing only the edges with $w_{i,j}>1$.

\paragraph{\textbf{Nash Social Welfare (NSW) Mathcing}. }  

Nash social welfare, originally proposed by Nash \cite{nash1950bargaining}, 
 is defined to be the geometric mean of agents' valuations. Allocation that maximizes Nash social welfare is known to have desirable properties. For example, such allocations are proved to be $\EF1$ and pareto optimal. Roughly, Nash social welfare maximizing allocations can be seen as a trade-off between social welfare and fairness. 
 
 In the first step of the algorithm, we allocate one item to each agent such that the Nash social welfare of the agents is maximized. More formally, define Nash Social Welfare matching of $[m]$ to be a partial allocation $\allocation = \langle \allocation_1, \allocation_2,\ldots, \allocation_n \rangle$, such that  $\Pi_i \valu_i(\allocation_i)$ is maximized and for every $i$ we have $|\allocation_i|=1$.

Similar to Nash social welfare allocations, Nash social welfare matchings exhibit beautiful properties which greatly help us in designing our algorithm. One simple property of such allocations is shown in Observation \ref{obs1}. Before we state Observation \ref{obs1}, we need to define concepts of \emph{improving} and \emph{strictly improving} cycles.

\begin{definition}
	Let $c = i_1 \rightarrow i_2 \rightarrow \ldots \rightarrow i_k \rightarrow i_1$ be a cycle in the envy-ratio graph. Then, $c$ is an improving cycle, if $$w_{i_1,i_2} \times w_{i_2,i_3} \times \ldots \times w_{i_{k-1},i_k} \times w_{i_k,i_1} > 1 \,.$$
	Furthermore, we say a cycle $c$ is  strictly improving cycle, if $c$ is an improving cycle and  for every $(i\rightarrow j) \in c$,  $w_{i,j}>1$ holds. 
\end{definition}

We note that strictly improving cycle is an essential concept  in all envy-cycle elimination methods \cite{lipton2004approximately,barman2017approximation,chaudhury2019little,amanatidis2019multiple}. These methods typically rotate the shares over  strictly improving cycles to enhance social welfare. However, to the best of our knowledge, no previous work made use of improving cycles. 

\begin{restatable}{observation} {obsf}
\label{obs1}
Suppose that we allocate one item to each agent using Nash social welfare matching. Then, the envy-ratio graph admits no improving cycle. 
\end{restatable}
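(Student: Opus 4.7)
The plan is to argue by contradiction: assume the Nash Social Welfare matching $\allocation$ does admit an improving cycle, and produce another one-item-per-agent matching with strictly larger product of utilities.

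First I would set up notation. Since $\allocation$ is a matching, $\allocation_i = \{b_i\}$ for a single item $b_i$, so $w_{i,j} = \valu_i(b_j)/\valu_i(b_i)$. Suppose, for contradiction, that the envy-ratio graph has an improving cycle $c = i_1 \to i_2 \to \cdots \to i_k \to i_1$, meaning
$$
\prod_{\ell=1}^{k} w_{i_\ell,\, i_{\ell+1}} \;=\; \prod_{\ell=1}^{k} \frac{\valu_{i_\ell}(b_{i_{\ell+1}})}{\valu_{i_\ell}(b_{i_\ell})} \;>\; 1,
$$
with indices modulo $k$.

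Next I would rotate the bundles along $c$: construct a new matching $\allocation'$ in which agent $i_\ell$ receives item $b_{i_{\ell+1}}$ for each $\ell$, while every agent outside the cycle keeps her original item. Since $c$ is a cycle, this is a permutation of $\{b_{i_1},\dots,b_{i_k}\}$ among $\{i_1,\dots,i_k\}$, so $\allocation'$ is again a one-item-per-agent matching. Comparing the two Nash products, all factors for agents outside the cycle cancel, and the ratio reduces exactly to the product of edge weights along $c$:
$$
\frac{\prod_{i} \valu_i(\allocation'_i)}{\prod_i \valu_i(\allocation_i)} \;=\; \prod_{\ell=1}^{k} \frac{\valu_{i_\ell}(b_{i_{\ell+1}})}{\valu_{i_\ell}(b_{i_\ell})} \;>\; 1.
$$
Hence $\allocation'$ has strictly larger Nash Social Welfare than $\allocation$, contradicting the maximality of $\allocation$.

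There is really no substantive obstacle: the only subtlety is ensuring the ratios are well-defined, which is covered by the standing assumption (stated when the envy-ratio graph is introduced) that every agent has positive value for every good, so every $\valu_{i_\ell}(b_{i_\ell})$ appearing in the denominator is nonzero. Note also that the argument does not require the edges of $c$ to be individually envious ($w_{i_\ell,i_{\ell+1}} > 1$); this is what distinguishes improving cycles from the strictly improving cycles exploited by envy-cycle-elimination methods, and is precisely why the NSW matching yields the stronger structural guarantee stated in the observation.
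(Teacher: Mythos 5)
Your proposal is correct and follows essentially the same argument as the paper: rotate the items along the improving cycle so that agent $i_\ell$ receives the item of agent $i_{\ell+1}$, and observe that the Nash product is multiplied by exactly the product of edge weights along the cycle, contradicting the maximality of the $\mathsf{NSW}$ matching. Your write-up merely spells out the cancellation computation that the paper summarizes as ``it is easy to see.''
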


The proof of the mentioned observation is available in Appendix \ref{sec:appen}. A particularly useful case of Observation \ref{obs1} is for the cycles of length $2$, which we state in Corollary \ref{col:vremaining}.

\begin{corollary}[of Observation \ref{obs1}]\label{col:vremaining}
	Suppose that for two agents $i,j$ we have $\valu_i(\allocation_j) \geq r \cdot \valu_i(\allocation_i)$, where $r \ge 1$. Then, we have $\valu_j(\allocation_i) \leq \valu_j(\allocation_j)/r$. 
\end{corollary}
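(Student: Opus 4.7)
The plan is to view Corollary \ref{col:vremaining} as the specialization of Observation \ref{obs1} to cycles of length exactly two in the envy-ratio graph, and then to rearrange the resulting inequality.

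First, I would instantiate Observation \ref{obs1} on the 2-cycle $c: i \rightarrow j \rightarrow i$. Since the NSW matching yields no improving cycle in the envy-ratio graph, the weight product along $c$ must satisfy $w_{i,j} \cdot w_{j,i} \leq 1$. Unpacking the definitions of the edge weights gives
\[
\frac{\valu_i(\allocation_j)}{\valu_i(\allocation_i)} \cdot \frac{\valu_j(\allocation_i)}{\valu_j(\allocation_j)} \leq 1,
\]
where the denominators are strictly positive because, under the standing assumption that each agent assigns a nonzero value to each good, both $\allocation_i$ and $\allocation_j$ are singletons of positive value.

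Next, I would substitute the hypothesis $\valu_i(\allocation_j) \geq r \cdot \valu_i(\allocation_i)$, i.e.\ $w_{i,j} \geq r$, into this bound. Since $w_{j,i} \leq 1/w_{i,j}$ and $w_{i,j} \geq r > 0$, we obtain $w_{j,i} \leq 1/r$, which rearranges to $\valu_j(\allocation_i) \leq \valu_j(\allocation_j)/r$, as desired.

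The main (and essentially only) obstacle to flag is making sure the 2-cycle in Observation \ref{obs1} is a legitimate instance: cycles of length two require the graph to treat $i \to j$ and $j \to i$ as distinct directed edges, which the envy-ratio graph does by construction. Beyond that, the proof is a direct algebraic consequence; there are no further cases to handle. I would therefore keep the write-up to a few lines, explicitly citing Observation \ref{obs1} for the key inequality and flagging the $r \geq 1$ assumption only to confirm that division by $r$ is harmless.
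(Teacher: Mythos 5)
Your proposal is correct and matches the paper's intended argument exactly: the paper explicitly introduces Corollary \ref{col:vremaining} as ``a particularly useful case of Observation \ref{obs1} for the cycles of length $2$,'' which is precisely your instantiation $w_{i,j} \cdot w_{j,i} \leq 1$ followed by substituting $w_{i,j} \geq r$ and rearranging. Your side remarks (positivity of the denominators and the legitimacy of 2-cycles in a complete weighted digraph) are harmless and consistent with the paper's standing assumptions.
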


\begin{definition}
	Suppose that we allocate one item to each agent using Nash social welfare matching. We define the envy-rank of an agent $i$, denoted by $r_i$ as 
	\begin{align*}
	r_i = \max_{j_0,j_1,\ldots,j_k} 
	\prod_{z=1}^{k} w_{j_z,j_{z-1}} \,,
	\end{align*}
\end{definition}
where $j_0 =i$.
Roughly speaking, let $p$ be a path leading to vertex $i$ in the envy-ratio graph such that the product of the weights of the edges in $p$ is maximum. Then, the envy-rank of agent $i$ equals to the product of the weights of the edges in $p$. Note that by Observation \ref{obs1} we can assume w.l.o.g that $p$ is a simple path (i.e., $p$ includes  no duplicate vertices).

\begin{observation}
  $p$ is a simple path. 
\end{observation}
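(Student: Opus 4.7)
The plan is to prove the observation by a direct contradiction/replacement argument leveraging Observation \ref{obs1}. Let $p = j_k \to j_{k-1} \to \cdots \to j_1 \to j_0 = i$ be a path in the envy-ratio graph achieving the maximum in the definition of $r_i$. I want to show that we may always assume such a maximizing path visits each vertex at most once.

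First I would suppose, for contradiction, that every maximizing path contains a repeated vertex; pick one such $p$ and let $v$ be a vertex occurring at two positions $z_1 < z_2$ along $p$, so $j_{z_1} = j_{z_2} = v$. The sub-sequence $j_{z_2} \to j_{z_2 - 1} \to \cdots \to j_{z_1}$ is then a closed directed walk in the envy-ratio graph, and since it returns to $v$ it contains a directed cycle $C$. Let $P_C$ denote the product $\prod_{z = z_1 + 1}^{z_2} w_{j_z, j_{z-1}}$ of edge-weights traversed along this closed sub-walk.

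Next I would invoke Observation \ref{obs1}: the envy-ratio graph associated with a Nash Social Welfare matching contains no improving cycle, so every directed cycle has product of weights at most $1$. Decomposing the closed sub-walk into simple cycles shows $P_C \le 1$. Now consider the path $p'$ obtained from $p$ by deleting positions $z_1 + 1, \dots, z_2$ and splicing; $p'$ still ends at $j_0 = i$ and its total weight equals the total weight of $p$ divided by $P_C$, which is at least the weight of $p$. Hence $p'$ also achieves the maximum defining $r_i$, and $p'$ is strictly shorter than $p$.

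Iterating this shortening procedure must terminate, since each step strictly decreases the number of vertices on the path, and the resulting path is necessarily simple (otherwise another shortening step could be performed). Therefore a simple path attaining the maximum always exists, justifying the ``without loss of generality'' assertion. The only delicate point is the decomposition of the closed sub-walk into simple cycles and applying Observation \ref{obs1} multiplicatively; but this is standard, since any closed directed walk in a digraph can be written as an edge-disjoint union of simple directed cycles, and the product of weights factorizes accordingly.
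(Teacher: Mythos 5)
Your proof is correct and takes essentially the same approach as the paper: locate a cycle in a non-simple maximizing path, invoke Observation \ref{obs1} to conclude its weight product is at most $1$, and splice it out without decreasing the total product. The only difference is that you spell out the details the paper leaves implicit (decomposing the closed sub-walk into simple cycles and iterating until termination), which is a welcome tightening rather than a different argument.
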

\begin{proof}
	Assume $p$ is not simple and let $c$ be a cycle in $p$. By Observation \ref{obs1} we know that $c$ can not be improving.  Therefore, the product of the weight of the edges of $p\setminus c$ is at least as large as $p$.
\end{proof}

To get a better understanding of these definitions take a look at Example \ref{example}.

\begin{example}
	\label{example}
		Consider an instance of the fair allocation problem with 4 items, 4 agents, and a valuation profile $\valus = \{\valu_1, \valu_2, \valu_3, \valu_4\}$ represented in Figure \ref{example_table}. Let $\allocation$ be the allocation that allocates item $i$ to agent $i$. The envy-ratio graph and the envy graph of $\allocation$ are shown in Figure \ref{example_erg} and Figure  \ref{example_eg} respectively. This allocation is not envy-free, however, it is both $\efx$ and $\efr$ since each agent receives only one item.
		
As we mentioned before, the envy-rank of an agent can be seen as the product of the weights of the edges in a path leading to that agent. For instance, consider the agent $1$. The envy-rank of this agent is $3$ which is the product of the weights of the edges in the path $3 \rightarrow 2 \rightarrow 1$. Also consider the cycle $1 \rightarrow 3 \rightarrow 2 \rightarrow 1$. This cycle is an improving cycle. Therefore,  allocation $\allocation$ is not a $\nsw$ matching. The allocation can be improved by moving the items alongside this cycle which leads to a new allocation $\allocation' = \langle \{3\}, \{1\}, \{2\}, \{4\} \rangle$. The envy-ratio graph of $\allocation'$ can be seen in Figure \ref{fig:erg2}.


\begin{figure}[t]
	\begin{subfigure}[b]{0.35\textwidth}
		\centering			\setlength\extrarowheight{4pt}
		\begin{tabular}{c||p{0.35cm} p{0.35cm} p{0.35cm} c}
			&$1$&$2$&$3$&$4$\\
			\hline
			$\valu_1$&8&2&4&3\\ 
			$\valu_2$&4&2&0&2\\ 
			$\valu_3$&0&3&2&2\\
			$\valu_4$&1&6&3&9\\
			
		\end{tabular}
	\vspace{1cm}
		\caption{Agents' valuations over items}
		\label{example_table}
	\end{subfigure}
	\begin{subfigure}[b]{0.45\textwidth}
		\centering
		\includegraphics[width=0.7\textwidth]{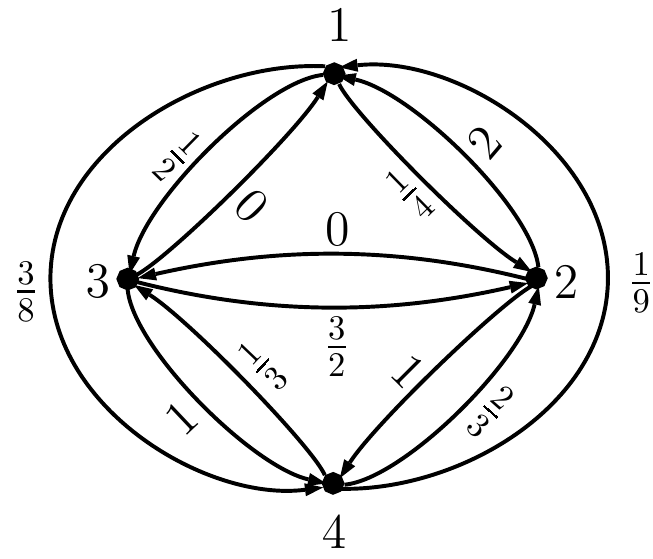}
		\caption{Envy-ratio graph}
		\label{example_erg}
	\end{subfigure}
	\begin{subfigure}[b]{0.35\textwidth}
		\vspace{1.1cm}
		\centering
		\includegraphics[width=0.4\textwidth]{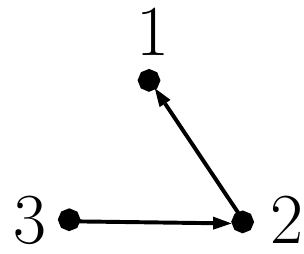}
		\vspace{1.1cm}
		\caption{Envy graph}
		\label{example_eg}
		
	\end{subfigure}
	\begin{subfigure}[b]{0.45\textwidth}
		\centering
		\vspace{0.2cm}
		\includegraphics[width=0.7\textwidth]{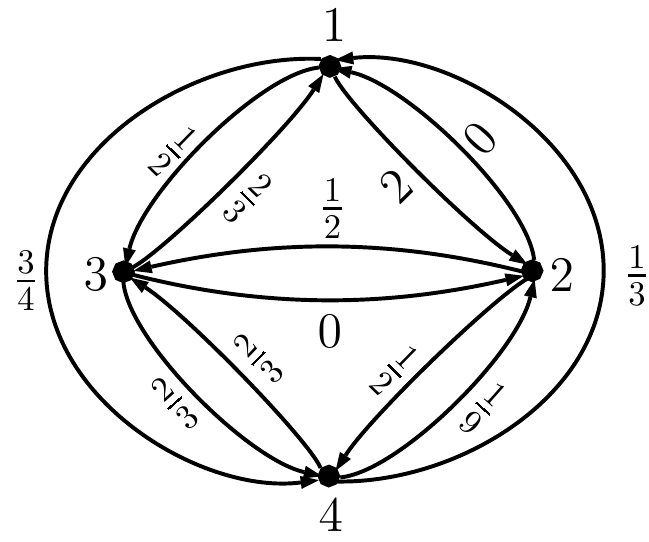}
		\caption{Envy-ratio graph after eliminating a cycle.}
		\label{fig:erg2}
	\end{subfigure} 
	\caption{An example to illustrate envy-ratio graph.}
\end{figure}

\end{example}

We finish our discussion in this section by mentioning some properties of envy-rank values. The proofs of flowing observations are available in Appendix \ref{sec:appen}.

\begin{restatable}{observation} {valloc}
	\label{clm:valloc}
	Suppose that allocation $\allocation$ allocates one item to each agent using a Nash social welfare matching. Then for every pair of agents $i$ and $j$, we have
	\begin{align*}
	\frac{\valu_i(\allocation_j)}{\valu_i(\allocation_i)} \le \min\big\{r_j, \frac{r_j}{r_i}\big\} \,.
	\end{align*}
\end{restatable}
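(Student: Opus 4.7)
The plan is to reduce the two bounds in $\min\{r_j, r_j/r_i\}$ to two separate inequalities, since $\valu_i(\allocation_j)/\valu_i(\allocation_i)$ is by definition exactly the envy-ratio graph weight $w_{i,j}$. The first bound $w_{i,j} \le r_j$ is essentially immediate from the definition of envy-rank: the two-vertex sequence $j_0 = j,\ j_1 = i$ is a legal candidate in the maximization defining $r_j$, and its associated product is precisely $w_{j_1, j_0} = w_{i,j}$, so $r_j \ge w_{i,j}$.

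For the substantive bound $w_{i,j} \le r_j / r_i$, equivalently $r_j \ge r_i \cdot w_{i,j}$, I would start from a (simple, by the immediately preceding observation) path $i = j_0, j_1, \ldots, j_k$ that realizes $r_i$, and prepend the vertex $j$ to form the sequence $\ell_0 = j,\ \ell_1 = i,\ \ell_2 = j_1,\ \ldots,\ \ell_{k+1} = j_k$. The product along this extended sequence telescopes to $w_{i,j} \cdot r_i$. In the easy case that $j$ does not appear among $j_0, \ldots, j_k$, this extended sequence is itself a simple path starting at $j$, and therefore witnesses $r_j \ge w_{i,j} \cdot r_i$ directly from the definition.

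The main obstacle is the remaining case, where $j$ coincides with some $j_s$ on the optimal $r_i$-path and so the extended sequence is no longer simple. The plan here is to invoke Observation~\ref{obs1}: any repetition $\ell_a = \ell_b$ with $a < b$ identifies a directed cycle in the envy-ratio graph whose edge weights are exactly the consecutive factors of the telescoped product between positions $a+1$ and $b$; since Observation~\ref{obs1} rules out improving cycles, this block of factors multiplies to at most $1$. Splicing such a cycle out replaces the corresponding stretch with a single edge in the (complete) envy-ratio graph, producing a shorter $j$-starting walk whose product is at least $w_{i,j} \cdot r_i$. Iterating the splicing operation yields a genuine simple path starting at $j$ with product at least $w_{i,j} \cdot r_i$, giving $r_j \ge w_{i,j} \cdot r_i$. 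Combined with the first inequality, this establishes the claimed bound $\valu_i(\allocation_j)/\valu_i(\allocation_i) \le \min\{r_j,\ r_j/r_i\}$.
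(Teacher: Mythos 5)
Your proposal is correct and follows essentially the same route as the paper: the single edge from $i$ to $j$ serves as a candidate path witnessing $w_{i,j}\le r_j$, and prepending $j$ to an optimal path realizing $r_i$ witnesses $r_j\ge r_i\cdot w_{i,j}$. The only difference is your cycle-splicing step to restore simplicity, which the paper skips because its definition of envy-rank maximizes over arbitrary vertex sequences (not just simple paths), so the extended walk is already a valid candidate; your extra invocation of Observation~\ref{obs1} is harmless and would be needed only under the simple-path reading of the definition.
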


In addition to Observation \ref{obs1}, Nash social welfare matchings admit another important and elegant property, which we state in 
Observation  \ref{obs:vremaining}. This observation provides upper bounds on the value of remaining goods and can be of independent interest for various fair allocation problems. 
\begin{restatable}{observation} {vremaining}
		\label{obs:vremaining}
		Suppose that we allocate one item to each agent using a Nash social welfare matching. Then, for each agent $i$ and any unallocated item $\ite$ we have 
		$$
		\valu_{i}(\ite) \leq \min\big\{ \valu_i(\allocation_i), \frac{\valu_i(\allocation_i)}{r_i} \big\} \,.
		$$
\end{restatable}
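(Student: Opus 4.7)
The plan is to establish the bound by a swap-and-cascade argument that exploits the optimality of the Nash social welfare matching. Fix an agent $i$ and an unallocated item $\ite$, and let $p \colon j_k \to j_{k-1} \to \cdots \to j_0 = i$ be a simple path that achieves the maximum envy-rank $r_i$ of agent $i$. Simplicity of $p$ is guaranteed by the observation stated right before this one, so the vertices $j_0, j_1, \ldots, j_k$ are all distinct.

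I will construct a new matching $\allocation'$ from $\allocation$ as follows: give $\ite$ to agent $i = j_0$, and for each $z \in \{1,2,\ldots,k\}$ reassign the item $\allocation_{j_{z-1}}$ (previously held by $j_{z-1}$) to agent $j_z$. All agents outside the path keep their original items. Because $p$ is simple, this describes a valid matching: each of the $k+1$ agents along the path receives a distinct item, and the only item freed up is $\allocation_{j_k}$ (while $\ite$, previously unallocated, now enters the matching). Every agent still holds exactly one item.

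Next, I will compare the Nash products. Only the contributions of $j_0, j_1, \ldots, j_k$ change, so
\begin{align*}
\frac{\prod_{\ell} \valu_\ell(\allocation'_\ell)}{\prod_{\ell} \valu_\ell(\allocation_\ell)}
= \frac{\valu_i(\ite)}{\valu_i(\allocation_i)} \cdot \prod_{z=1}^{k} \frac{\valu_{j_z}(\allocation_{j_{z-1}})}{\valu_{j_z}(\allocation_{j_z})}
= \frac{\valu_i(\ite)}{\valu_i(\allocation_i)} \cdot \prod_{z=1}^{k} w_{j_z,j_{z-1}}
= \frac{\valu_i(\ite)}{\valu_i(\allocation_i)} \cdot r_i .
\end{align*}
Since $\allocation$ is a Nash social welfare matching, this ratio cannot exceed $1$, yielding $\valu_i(\ite) \le \valu_i(\allocation_i)/r_i$. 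The bound $\valu_i(\ite) \le \valu_i(\allocation_i)$ then follows immediately because $r_i \ge 1$ (the trivial length-zero path gives the empty product equal to $1$); alternatively, it corresponds to the special case $k=0$ of the same swap argument, where we simply replace $\allocation_i$ by $\ite$ in agent $i$'s bundle.

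The only delicate point is that the cascade must define a legitimate matching, which reduces to verifying that the optimal path $p$ uses each agent at most once; this is precisely the simplicity observation already in hand. Everything else is routine bookkeeping of the Nash product, so I do not anticipate further obstacles.
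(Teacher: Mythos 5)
Your proposal is correct and follows essentially the same argument as the paper: shift items along a maximum-weight path ending at agent $i$, give the unallocated item $\ite$ to $i$, and observe that NSW optimality forces the product ratio $r_i \cdot \valu_i(\ite)/\valu_i(\allocation_i)$ to be at most $1$; the bound $\valu_i(\ite) \le \valu_i(\allocation_i)$ is the same direct-swap special case the paper uses. Your explicit verification that the cascade yields a legitimate matching (via simplicity of the path) is a point the paper leaves implicit, but it is not a different approach.
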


%

\section{An approximate EFR Allocation}
\label{sec:efr}
In this section, we present our algorithm for finding a  $0.73$-$\EFR$ allocation.  Our algorithm is divided into 3 steps, namely $\nsw$ matching, allocation refinement, and envy-graph based allocation. In the first step, we allocate each agent one item using a Nash social welfare matching and accordingly divide the agents into three groups based on their envy-rank. Next, in the second step we allocate a set of goods to the agents in each group, and finally in the third step we allocate the rest of the items using the classic envy-cycle elimination method. The outline of our algorithm is represented in Algorithm \ref{algo}.

\SetKwInOut{Parameter}{Parameters}

\begin{algorithm}[H]
\Parameter {$\varphi=\sqrt{3}+1$.}
\SetAlgoLined
\tcp{Step 1}
Allocate $\nsw$ matching\;
Let $r_i$ be envy-rank of an agent $i$. Divide the agents into groups $\group_1$, $\group_2$, $\group_3$ as follows. 	Agent $i$ belongs to $\group_1$ if $r_i > \varphi$, belongs to $\group_2$ if $2<r_i\leq \varphi$, and belongs to $\group_3$ if $r_i \leq 2$\;
\tcp{Step 2}
Let $\mathcal{O}$ be a topological ordering of the agents with respect to the envy-graph\;
\ForEach{$i \in \group_3$ ordered by $\mathcal O$}
{
	Ask agent $i$ to pick his most valuable remaining item\;
}
\ForEach{$i \in \group_3$ ordered by $\mathcal O$}
{
	Ask agent $i$ to pick his most valuable remaining item\;
}
\ForEach{$i \in\group_2$ ordered by $\mathcal O$}
{
	Ask agent $i$ to pick his most valuable remaining item\;
}
\tcp{Step 3}
 \While{the allocation is not complete}{
 Eliminate all directed cycles in the envy-graph\;
 	Let $s$ be an arbitrary source in the envy-graph\;
	Ask agent $s$ to pick his most valuable remaining item\;
 }

\Return the allocation;
 \caption{The outline of the $0.73$-$\efr$ algorithm.}
 \label{algo}
\end{algorithm}

\subsection{Step 1.}
In the first step, we allocate one item to each agent using a $\nsw$ matching. We first show that this allocation can be found in polynomial time. The proof is available in Appendix \ref{sec:appen}.

\begin{restatable}{observation}{poly}
$\nsw$ matching can be found in polynomial time.
\end{restatable}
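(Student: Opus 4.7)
The plan is to reduce the computation of a Nash Social Welfare matching to a standard maximum weight bipartite matching problem, which is solvable in polynomial time by the Hungarian algorithm (or by any standard min-cost flow routine).

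First I would build the bipartite graph $G = (\agents \cup \items, E)$ in which one side represents agents, the other side represents items, and there is an edge between agent $i$ and item $\ite$ of weight $w(i,\ite) = \log \valu_i(\{\ite\})$. Since the logarithm is a strictly increasing function, for any one-to-one assignment $\allocation$ we have
\[
\sum_{i} w(i,\allocation_i) \;=\; \sum_i \log \valu_i(\allocation_i) \;=\; \log \prod_i \valu_i(\allocation_i),
\]
so a matching of size $n$ maximizes $\prod_i \valu_i(\allocation_i)$ if and only if it maximizes $\sum_i w(i,\allocation_i)$. Running a maximum weight bipartite matching algorithm on $G$ therefore yields an $\nsw$ matching, and this takes polynomial time in $n$ and $m$.

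The one subtlety is items that an agent values at $0$, for which the logarithm is undefined. The hardest part is really just handling this carefully, but it is still routine: one first computes, by a cardinality bipartite matching restricted to the edges with $\valu_i(\{\ite\}) > 0$, the maximum number $k$ of agents that can simultaneously receive strictly positive value. If $k < n$, the product $\prod_i \valu_i(\allocation_i)$ is zero for every one-to-one assignment and any matching of size $n$ (obtained by completing a maximum-cardinality matching arbitrarily) is optimal. If $k = n$, we delete the zero-weight edges and run maximum weight matching with weights $\log \valu_i(\{\ite\})$ on the remaining edges; since the optimum necessarily avoids the deleted edges, the output is an $\nsw$ matching.

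Finally, both the cardinality and weighted bipartite matching subroutines run in time polynomial in $n$ and $m$ (for instance $O((n+m)^3)$ using the Hungarian method), and the arithmetic on $\log \valu_i(\{\ite\})$ values can be handled in polynomial bit complexity by working with the original rational weights and comparing products — equivalent to standard integer-weight bipartite matching — so the overall running time is polynomial, which proves the observation.
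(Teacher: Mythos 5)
Your proposal is correct and takes essentially the same approach as the paper: reduce the problem to maximum weight bipartite matching with edge weights $\log \valu_i(\{\ite\})$, using monotonicity of the logarithm to identify maximizing the sum of logs with maximizing the product. Your additional treatment of zero-valued items and of bit complexity is extra care the paper's proof silently skips, but it does not change the underlying argument.
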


Let $\allocation$ be $\nsw$ matching and fix a parameter $\varphi=\sqrt{3}+1$. Based on the envy-rank of the agents, we divide them into $3$ groups $\group_1,\group_2,$ and $\group_3$ as follows.
\begin{itemize}
	\item Agent $i$ belongs to $\group_1$ if $r_i > \varphi$. 
	\item Agent $i$ belongs to $\group_2$ if $2<r_i\leq \varphi$.	
	\item Agent $i$ belongs to $\group_3$ if $r_i \leq 2$.
	\end{itemize}
Note that by Observation \ref{obs:vremaining}, we know that for every remaining item $b$ the following properties hold.
\begin{itemize}
	\item (Property 1): For every agent $i \in \group_1$ we have $\valu_{i}(b) < \valu_{i}(\allocation_i)/\varphi$. 
	\item (Property 2): For every agent $i \in \group_2$ we have $\valu_{i}(b) < \valu_{i}(\allocation_i)/2$. 
\end{itemize}

Intuitively, if each remaining item is worth less than $\valu_i(\allocation_i) /\varphi$ to every agent $i$, then we can guarantee the approximation factor of $1/(1+1/\varphi)$ in the third step. This property holds for the agents in $\group_1$; however, 
 this is not the case for agents in $\group_2$ and $\group_3$. In the second step, we seek to allocate a set of items to the agents in $\group_2$ and $\group_3$ so that the same property holds for these agents. Note that alongside this property, the final partial allocation after the second step must be fair (i.e., $0.73$-$\efr$).

\subsection{Step 2.}
In the second step, we allocate one item to each agent in $\group_2$ and two items to each agent in $\group_3$. Algorithm \ref{algo} shows the method by which we allocate these items to the agents in $\group_2$ and $\group_3$. 
Let $\mathcal{O}$ be a topological ordering of the agents with respect to the envy-graph. We order the agents in $\group_3$ according to $\mathcal{O}$ and ask them one by one to pick their most valuable remaining good. We then again ask agents in $\group_3$ to pick one more item according to the same topological ordering $\mathcal{O}$.  Afterwards, we order the agents in $\group_2$ according to $\mathcal{O}$ and ask them one by one to add the most desirable remaining item to their bundles. 

We now show that at the end of Step 2 the following conditions hold. The proof can be found in Appendix \ref{proofsec}.
\begin{mclaim}
\label{clm:secstep}
At the end of Step 2 the following conditions hold.
\begin{itemize}
	\item The allocation is $\EFR$ with respect to the agents in $\group_1$.	
	\item The allocation is $(3/4)$-$\EFR$ with respect to the agents in $\group_2$.
	\item The allocation is $(2/\varphi)$-$\EFR$ with respect to the agents in $\group_3$.

\end{itemize}
Since $2/\varphi < 3/4$, the allocation by the end of Step 2 is $(2/\varphi)$-$\efr$.
\end{mclaim}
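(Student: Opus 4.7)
My plan is to verify each of the three bullet points by a case analysis on the group of the envied agent $j$, using the fixed bundle sizes $|\allocation_i| = 1, 2, 3$ for $i \in \group_1, \group_2, \group_3$ and rewriting the approximate $\efr$ condition for additive valuations as $\valu_i(\allocation_i) \geq c \cdot \tfrac{|\allocation_j|-1}{|\allocation_j|}\valu_i(\allocation_j)$. The bounds on $\valu_i(\allocation_j)$ will be assembled from three tools: Observation~\ref{clm:valloc} for the $\nsw$-matching item of $j$ in terms of the envy-ranks; Observation~\ref{obs:vremaining} and Properties~$1$ and $2$ for the value (to $i$) of items that remain unallocated after Step~$1$ (and hence for the Step~$2$ picks of $j$); and the topological ordering $\mathcal O$ to compare values of items picked at different times during Step~$2$.

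The case $i \in \group_1$ is cleanest. Since $r_i > \varphi$, Observation~\ref{clm:valloc} bounds the $\nsw$ item of $j$ by $(r_j/r_i)\valu_i(\allocation_i)$, which is at most $\valu_i(\allocation_i)$ when $j \in \group_2$ and at most $(2/\varphi)\valu_i(\allocation_i)$ when $j \in \group_3$; Property~$1$ bounds each Step~$2$ item of $j$ by $\valu_i(\allocation_i)/\varphi$. Plugging these in with $|\allocation_j| \in \{2,3\}$ and $\varphi = \sqrt{3} + 1$ reduces the verification to $8/(3\varphi) \leq 1$, which holds, so exact $\efr$ is obtained. For $i \in \group_2$, Property~$2$ bounds each Step~$2$ item of $j$ by $\valu_i(\allocation_i^{\nsw})/2$, while Observation~\ref{clm:valloc} bounds the $\nsw$ item of $j$ by $(\varphi/2)\valu_i(\allocation_i^{\nsw})$ when $j\in\group_2$ and by $\valu_i(\allocation_i^{\nsw})$ when $j\in\group_3$; combining these shows that $(3/4)\cdot \tfrac{|\allocation_j|-1}{|\allocation_j|}\valu_i(\allocation_j)$ already lies below $\valu_i(\allocation_i^{\nsw})$, which is at most $\valu_i(\allocation_i)$.

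The main obstacle is $i \in \group_3$, where no tight analogue of Properties~$1$ or $2$ is available and Observation~\ref{obs:vremaining} only yields the loose bound $\valu_i(b)\leq\valu_i(\allocation_i^{\nsw})$. The case $j\in\group_2$ is handled directly: $j$ picks after all $\group_3$ picks, so $b_j$ is still available when $i$ makes its second pick, giving $\valu_i(b_i^2)\geq\valu_i(b_j)$; combined with $\valu_i(\allocation_j^{\nsw})\leq\varphi\valu_i(\allocation_i^{\nsw})$, the factor $1/\varphi$ absorbs everything. For $j\in\group_3$, I split on the topological order. If $j$ precedes $i$ in $\mathcal O$, then $i$ does not envy $j$'s $\nsw$ item, so $\valu_i(\allocation_j^{\nsw})\leq\valu_i(\allocation_i^{\nsw})$; the second-pass pick $b_j^2$ is still available when $i$ makes its first pick, giving $\valu_i(b_i^1)\geq\valu_i(b_j^2)$, while $\valu_i(b_j^1)\leq\valu_i(\allocation_i^{\nsw})$ comes from Observation~\ref{obs:vremaining}. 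If $i$ precedes $j$, then both of $j$'s Step~$2$ picks are still available at the corresponding picks of $i$, so $\valu_i(b_j^k)\leq\valu_i(b_i^k)$ for $k=1,2$, while $\valu_i(\allocation_j^{\nsw})\leq(r_j/r_i)\valu_i(\allocation_i^{\nsw})\leq 2\valu_i(\allocation_i^{\nsw})$. In both subcases the $(4/(3\varphi))$-bound on $\valu_i(\allocation_j)$ is dominated by $\valu_i(\allocation_i)$ precisely because $8/(3\varphi)\leq 1$, which is exactly what the choice $\varphi=\sqrt{3}+1$ secures.
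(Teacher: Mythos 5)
Your proposal is correct and follows essentially the same route as the paper's own proof: the same rewriting of the expected value as $\frac{|\allocation_j|-1}{|\allocation_j|}\cdot\valu_i(\allocation_j)$ (the paper's Claim~\ref{clm:exsum}), the same case analysis over the groups of $i$ and $j$, and the same three tools (Observation~\ref{clm:valloc}, Observation~\ref{obs:vremaining}/Properties~1--2, and the greedy picks under the topological ordering $\mathcal{O}$), arriving at the same numeric bounds $\frac{8}{3\varphi}$, $\frac{\varphi}{2}$, and $\frac{4}{3}$. The only cosmetic deviations are that for $i,j\in\group_2$ you bypass the paper's subcase split on whether $i$ envies $j$ (your uniform bound $\frac{\varphi+1}{4}\valu_i(\ite_i)<\valu_i(\allocation_i)$ handles both subcases at once), and for $i,j\in\group_3$ you split on precedence in $\mathcal{O}$ rather than on whether $\valu_i(\ite_i)<\valu_i(\ite_j)$; both splits produce the same chain of inequalities, since non-precedence in $\mathcal{O}$ yields exactly the no-envy inequality the paper assumes in its second subcase.
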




\subsection{Step 3.}
In the third step, we use the envy-graph to allocate the remaining unallocated items.  We repeat the following steps until all the goods are allocated.
\begin{itemize}
	\item Find and eliminate all the directed cycles from the envy-graph. In order to eliminate all cycles in the envy-graph, we repeatedly find a directed cycle in the envy-graph. Let $i_1 \rightarrow i_2 \rightarrow \cdots \rightarrow i_k \rightarrow i_1$ be a cycle in envy-graph. By definition, each agent $i_j$ envies agent $i_{(j \Mod  k)+1}$, i.e., 
\begin{align*}
v_{i_j}(\allocation_{i_j}) < v_{i_j}\big(\allocation_{i_{(j \Mod  k)+1}}\big) \,,
\end{align*}
where $\allocation$ is the current allocation.
We then exchange the allocations of the agents that are in the cycle such that each agent $i_j$ receives $\allocation_{i_{(j \Mod  k)+1}}$. Note that this exchanging does not change bundles. Furthermore, the utility of each agent does not decrease. Hence, if the allocation is $\alpha$-$\efr$ before the exchange, it remains $\alpha$-$\efr$ after it (\textit{Lemma 6.1} in \cite{plaut2018almost}). Also, exchanging these allocations decreases the number of edges in the envy-graph. Thus, we eventually find an allocation such that its corresponding envy-graph is acyclic.
	\item Give an item to an agent that no-one envies. In the previous step we showed that we can always find an allocation such that its corresponding envy-graph is acyclic. Therefore, there should be a vertex in the envy-graph with no incoming edges. Let $i$ be the agent corresponding to this vertex. Since $i$ has no incoming edges in the envy-graph, no other agent envies $i$. At this step, we ask agent $i$ to pick his best item among all remaining goods.
\end{itemize}

The following Lemma shows the approximation guarantee of our algorithm. The proof can be found in Appendix \ref{sec:appen}.

\begin{restatable}{lemma}{stefr}
\label{lem:efxs3}
	Suppose that we are given a partial $\alpha$-$\efr$ allocation $\allocation$ such that for every agent $i$ and every remaining item $\ite$, we have $\valu_{i}(\ite) \le \alpha' \cdot \valu_{i}(\allocation_i)$ for some constant $\alpha' \le 1$. Then, the resulting allocation after performing the method mentioned above is $\min\{\alpha, \frac{1}{1+\alpha'}\}$-$\efr$.
\end{restatable}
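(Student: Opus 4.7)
The plan is to prove by induction on the rounds of Step~3 that the current partial allocation remains $\beta$-$\efr$ with $\beta := \min\{\alpha,\, 1/(1+\alpha')\}$, and that the auxiliary hypothesis $\valu_i(\ite) \leq \alpha' \valu_i(\allocation_i)$ continues to hold for every remaining item $\ite$ and every agent $i$. The base case is exactly the premise of the lemma. There are two operations to analyze inside the main loop: (i)~eliminating a directed cycle of the envy-graph, and (ii)~a source agent $s$ picking his most valuable remaining item $b^\ast$.

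For operation~(i), I will invoke Lemma~6.1 of \cite{plaut2018almost} (as cited in the paper itself): rotating bundles along a cycle merely permutes the multiset of agents' bundles while weakly increasing every agent's value for his own bundle. Any inequality $\valu_i(\allocation_i) \geq \beta \cdot \mathbb{E}_{\ite \sim \distribution_j}[\valu_i(\allocation_j \setminus \{\ite\})]$ therefore survives: the right-hand side after the rotation matches a right-hand side that held before the rotation (against a bundle that was present in the earlier collection), and the left-hand side is non-decreasing. The auxiliary bound $\valu_i(\ite) \leq \alpha' \valu_i(\allocation_i)$ on remaining items is similarly preserved, since $\valu_i(\allocation_i)$ only grows.

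For operation~(ii), let $k = |\allocation_s|$ before the pick. Any $\efr$-constraint with envied index $j \neq s$ has unchanged right-hand side and non-decreasing left-hand side (strictly so when $i=s$), hence is preserved. The non-trivial case is $i \neq s$, $j=s$. Applying the identity $\mathbb{E}_{\ite \sim \distribution_A}[\valu_i(A \setminus \{\ite\})] = \tfrac{|A|-1}{|A|}\valu_i(A)$ to $\allocation_s^{\mathrm{new}} = \allocation_s \cup \{b^\ast\}$ gives
\begin{align*}
\mathbb{E}_{\ite \sim \distribution_s^{\mathrm{new}}}\bigl[\valu_i(\allocation_s^{\mathrm{new}} \setminus \{\ite\})\bigr] \;=\; \frac{k}{k+1}\bigl(\valu_i(\allocation_s) + \valu_i(b^\ast)\bigr).
\end{align*}
Since $s$ is a source of the envy-graph, no agent envies $s$, so $\valu_i(\allocation_i) \geq \valu_i(\allocation_s)$; and by the maintained hypothesis, $\valu_i(b^\ast) \leq \alpha'\,\valu_i(\allocation_i)$. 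Combining,
\begin{align*}
\beta \cdot \mathbb{E}_{\ite \sim \distribution_s^{\mathrm{new}}}\bigl[\valu_i(\allocation_s^{\mathrm{new}} \setminus \{\ite\})\bigr] \;\leq\; \beta \cdot \tfrac{k}{k+1}(1+\alpha')\,\valu_i(\allocation_i) \;\leq\; \beta(1+\alpha')\,\valu_i(\allocation_i) \;\leq\; \valu_i(\allocation_i),
\end{align*}
where the last step uses $\beta \leq 1/(1+\alpha')$. This gives the required $\beta$-$\efr$ inequality.

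I expect the main difficulty to be bookkeeping rather than any deep idea: one must apply the source property to the \emph{old} bundle $\allocation_s$, and the remaining-item bound to $b^\ast$ at exactly the moment it is picked (valid because $\valu_i(\allocation_i)$ only grows during the process), and then notice that the factor $\tfrac{k}{k+1}\leq 1$ is precisely what makes the worst-case choice $\beta \leq 1/(1+\alpha')$ suffice uniformly in $k \geq 1$. The induction terminates after finitely many iterations, so the final (complete) allocation is $\beta$-$\efr$, as required.
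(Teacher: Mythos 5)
Your proof is correct and takes essentially the same route as the paper's: both decompose Step~3 into the two operations, handle cycle elimination via Lemma~6.1 of \cite{plaut2018almost}, and bound the envy toward a source $s$ after it picks $b^\ast$ by combining the no-envy property $\valu_i(\allocation_s) \le \valu_i(\allocation_i)$ with the remaining-item bound $\valu_i(b^\ast) \le \alpha' \cdot \valu_i(\allocation_i)$, which yields the $\tfrac{1}{1+\alpha'}$ factor. The only differences are presentational: you maintain the remaining-item bound as an explicit induction invariant (the paper leaves this implicit) and use the exact identity $\mathbb{E}\bigl[\valu_i(A\setminus\{\ite\})\bigr]=\tfrac{|A|-1}{|A|}\valu_i(A)$ where the paper just bounds the expectation by the full value of the enlarged bundle.
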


We now show that at the beginning of Step 3, the valuation of every remaining item is small for all agents. The proof is available in Appendix \ref{sec:appen}.
\begin{restatable}{observation}{efrsth}
\label{before_step_3}
	Let $\allocation$ be the allocation after Step 2. Then for an agent $i$ and every remaining item $\ite$ we have
	\begin{itemize}
		\item If $i \in \group_1$, $\valu_i(\ite) \le  \valu_i(\allocation_i)/\varphi$.
		\item If $i \in \group_2$, $\valu_i(\ite) \le \valu_i(\allocation_i)/3$.
		\item If $i \in \group_3$, $\valu_i(\ite) \le  \valu_i(\allocation_i)/3$. 
	\end{itemize}
\end{restatable}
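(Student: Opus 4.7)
My plan is to prove each of the three bullet points separately, exploiting the fact that Step~2 only enlarges bundles (so $v_i(\allocation_i)$ after Step~2 is at least $v_i(\allocation_i)$ after Step~1) and that Step~2 only shrinks the pool of remaining items. Throughout, I will write $\allocation^{(1)}$ for the allocation after Step~1 and $\allocation^{(2)}$ for the allocation after Step~2, and I will use Observation~\ref{obs:vremaining} as the main input from the NSW matching.

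For $i \in \group_1$: the bundle of $i$ is untouched in Step~2, so $\allocation^{(2)}_i = \allocation^{(1)}_i$, and every item remaining at the start of Step~3 is in particular unallocated after Step~1. Observation~\ref{obs:vremaining} combined with $r_i > \varphi$ then yields $v_i(b) \le v_i(\allocation^{(1)}_i)/r_i < v_i(\allocation^{(2)}_i)/\varphi$, which is the desired bound.

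For $i \in \group_2$: let $b_i$ be the single item $i$ picks in Step~2. Any item $b$ still remaining at the start of Step~3 was available at the moment $i$ picked, and since $i$ chose his most valuable remaining good we have $v_i(b) \le v_i(b_i)$. Since $b_i$ was unallocated after Step~1, Observation~\ref{obs:vremaining} and $r_i > 2$ give $v_i(b_i) \le v_i(\allocation^{(1)}_i)/r_i < v_i(\allocation^{(1)}_i)/2$. Writing $v_i(\allocation^{(2)}_i) = v_i(\allocation^{(1)}_i) + v_i(b_i) > 2v_i(b_i) + v_i(b_i) = 3v_i(b_i)$ yields $v_i(b) \le v_i(b_i) < v_i(\allocation^{(2)}_i)/3$.

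For $i \in \group_3$: let $b_i^1, b_i^2$ be the two items $i$ picks in Step~2, in the order picked. By the same "most-valuable remaining" reasoning, any item $b$ remaining at Step~3 satisfies $v_i(b) \le v_i(b_i^2)$, and further $v_i(b_i^2) \le v_i(b_i^1)$ because $b_i^2$ was still available when $i$ picked $b_i^1$. Applying Observation~\ref{obs:vremaining} to $b_i^1$ (with $r_i \ge 1$) gives $v_i(b_i^1) \le v_i(\allocation^{(1)}_i)$. Chaining these inequalities,
\[
v_i(\allocation^{(2)}_i) = v_i(\allocation^{(1)}_i) + v_i(b_i^1) + v_i(b_i^2) \ge v_i(b_i^1) + v_i(b_i^1) + v_i(b_i^2) \ge 3 v_i(b_i^2) \ge 3 v_i(b),
\]
so $v_i(b) \le v_i(\allocation^{(2)}_i)/3$.

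There is no single hard step here; the whole proof is driven by Observation~\ref{obs:vremaining} plus the greedy choice rule of Step~2. The only thing to be careful about is the ordering of the picks in Step~2 (especially for $\group_3$, where the two rounds must be used in sequence to conclude $v_i(b_i^1) \ge v_i(b_i^2)$), and the fact that any item still unallocated at the start of Step~3 was necessarily available at every picking moment in Step~2, which is what lets me compare it to the agent's own picks.
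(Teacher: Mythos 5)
Your proof is correct and follows essentially the same route as the paper's: both arguments combine Observation~\ref{obs:vremaining} (applied to items unallocated after the NSW matching) with the greedy ``most valuable remaining item'' rule of Step~2, differing only in bookkeeping (you apply the observation to the picked items and chain inequalities, while the paper applies it directly to the remaining item $\ite$). The extra care you take with the pick ordering for $\group_3$ and with the fact that a Step-3 leftover was available at every Step-2 pick is exactly the implicit content of the paper's statement that all three items of a $\group_3$ agent are at least as valuable as any remaining item.
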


It follows from the observation above that for every agent $i$ the valuation of every remaining item is at most $\valu_i(\allocation_i)/\varphi$ after the second step of our algorithm. Recall that our allocation by the end of Step 2 is $(2/\varphi)$-$\efr$. Therefore, using Lemma \ref{lem:efxs3}, the allocation at the end of Step 3 is $\min\big\{\frac{2}{\varphi},\frac{1}{1+1/\varphi} \big\}$-$\efr$. Since $\varphi=\sqrt{3}+1$, we have
\begin{align*}
\frac{2}{\varphi} = \frac{1}{1+1/\varphi} = \sqrt{3}-1.
\end{align*} 
Therefore our final allocation is $\sqrt{3}-1 \approx 0.73$-$\efr$. This, coupled with the fact that all the steps can be implemented in polynomial time follows Theorem \ref{main2}.

\efrthm*
\section{Simple $(\phi-1)$-EFX Allocation} \label{sec:efx}

In this section, we show that our idea to use $\nsw$ matching as the first step of the allocation can easily give a $(\phi-1)$-$\efx$ allocation where $\phi = \frac{1+ \sqrt{5}}{2}$ is the golden ratio. The approximation ratio of our algorithm matches the state-of-the-art $(\phi-1)$ approximation result by Amanitidis et al. \cite{Amanatidis2018comparison}. Likewise our algorithm for $\efr$ allocation, our $(\phi-1)$-$\efx$ algorithm consists of 3 steps, namely $\nsw$ matching, allocation refinement, and envy-graph based allocation. The first and the third steps of our algorithm are almost the same as our previous algorithm. For the sake of completeness, we will briefly restate these steps in the rest of the section. The outline of our algorithm is represented in Algorithm \ref{algo:weakefx}.

\begin{algorithm}
\SetAlgoLined
\tcp{Step 1}
Allocate $\nsw$ matching\;
Let $r_i$ be envy-rank of an agent $i$. Divide the agents into groups $\group_1$ and $\group_2$ as follows. 	Agent $i$ belongs to $\group_1$ if $r_i > \phi$ and belongs to $\group_2$ otherwise\;
\tcp{Step 2}
Let $\mathcal{O}$ be a topological ordering of the agents with respect to the envy-graph\;
\ForEach{$i \in\group_2$ ordered by $\mathcal O$}
{
	Ask agent $i$ to pick his most valuable remaining item\;
}
\tcp{Step 3}
 \While{the allocation is not complete}{
 Eliminate all directed cycles in the envy-graph\;
 	Let $s$ be an arbitrary source in the envy-graph\;
	Ask agent $s$ to pick his most valuable remaining item\;
 }

\Return the allocation;
 \caption{The outline of the $(\phi-1)$-$\efx$ algorithm.}
 \label{algo:weakefx}
\end{algorithm}

\subsection{Step 1.}
In the first step, we allocate one item to each agent using $\nsw$ matching. Let $\allocation$ be the resulting allocation, and let $\phi=\frac{1+\sqrt{5}}{2}$ be the golden ratio. Based on the envy-rank of the agents, we divide them into $2$ groups $\group_1$ and $\group_2,$ as follows:
\begin{itemize}
	\item Agent $i$ belongs to $\group_1$ if $r_i > \phi$. 
	\item Agent $i$ belongs to $\group_2$ if $r_i\leq \phi$.	
\end{itemize}

\subsection{Step 2.}
In the second step, we allocate one item to each agent in $\group_2$ via the following process. let $\mathcal{O}$ be a topological ordering of the agents with respect to the envy-graph. We order the agents in $\group_2$ according to $\mathcal{O}$ and ask them one by one to pick their most valuable remaining good.

After this step, the bundle of every agent in $\group_2$ contains two items. For an agent $i \in \group_2$ we use $\{\ite_i, \ite'_i\}$ to denote the items allocated to this agent where $\ite'_i$ is the item allocated in Step $2$. We also use $\{\ite_i\}$ to denote the only item received by an agent $i \in \group_1$.  We show that at the end of Step 2 the following clamis hold.

\begin{restatable}{mclaim}{efxstf}
By the end of Step 2, the allocation is $\EFX$ with respect to the agents in $\group_1$.
\end{restatable}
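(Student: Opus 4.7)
I will verify the EFX condition $\valu_i(\allocation_i) \geq \max_{b \in \allocation_j} \valu_i(\allocation_j \setminus \{b\})$ for each $i \in \group_1$ and each other agent $j$, by splitting into two cases according to whether $j \in \group_1$ or $j \in \group_2$. The case $j \in \group_1$ is immediate: such an agent only receives the single item $\ite_j$ from the $\nsw$ matching, so $\allocation_j \setminus \{b\} = \emptyset$ for any $b \in \allocation_j$, and the inequality $\valu_i(\allocation_i) \geq 0$ is trivial.

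The substantive case is $j \in \group_2$, where $\allocation_j = \{\ite_j, \ite'_j\}$. Here the EFX condition reduces to showing
\[
\valu_i(\ite_i) \;\geq\; \max\bigl\{\valu_i(\ite_j),\, \valu_i(\ite'_j)\bigr\}.
\]
For the first term, I will invoke Observation~\ref{clm:valloc} applied to the $\nsw$ matching (which is precisely what $\ite_i,\ite_j$ are): it gives $\valu_i(\ite_j)/\valu_i(\ite_i) \leq \min\{r_j,\, r_j/r_i\}$. Since $i \in \group_1$ and $j \in \group_2$ imply $r_i > \phi \geq r_j$, the ratio $r_j/r_i$ is strictly less than $1$, so $\valu_i(\ite_j) < \valu_i(\ite_i)$.

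For the second term, the key observation is that $\ite'_j$ was unallocated after Step~1 (it is picked only during Step~2), and hence qualifies as a ``remaining item'' in the sense of Observation~\ref{obs:vremaining}. That observation yields $\valu_i(\ite'_j) \leq \valu_i(\allocation_i)/r_i$. Since $r_i > \phi > 1$, this gives $\valu_i(\ite'_j) < \valu_i(\ite_i)$. Combining the two bounds establishes EFX for every pair $(i,j)$ with $i \in \group_1$. I do not foresee a real obstacle here: the entire content of the claim is packaged into Observations~\ref{clm:valloc} and \ref{obs:vremaining}, and the threshold $\phi$ was chosen in Step~1 precisely so that the envy-rank gap between $\group_1$ and $\group_2$ makes both bounds strict. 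The only thing that needs a brief justification is that Observation~\ref{obs:vremaining} applies to $\ite'_j$ at the moment it is picked, which follows because the observation bounds $\valu_i(b)$ for every item $b$ left unallocated by the $\nsw$ matching, independently of the order in which such items are subsequently distributed.
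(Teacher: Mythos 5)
Your proposal is correct and follows essentially the same route as the paper's own proof: the same case split on $j \in \group_1$ versus $j \in \group_2$, bounding $\valu_i(\ite_j)$ via Observation~\ref{clm:valloc} (using $r_j/r_i < 1$) and $\valu_i(\ite'_j)$ via Observation~\ref{obs:vremaining}. The only difference is cosmetic: you spell out explicitly why $r_j/r_i < 1$ and why Observation~\ref{obs:vremaining} applies to items picked in Step~2, details the paper leaves implicit.
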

\begin{proof}
Let $i$ be an agent in $\group_1$, we show that for every other agent $j$  we have 
$$\valu_i(\allocation_i) \geq \max_{\ite \in \allocation_j} \valu_i(\allocation_j \setminus \{\ite\}) \,,$$ 
so the allocation is $\efx$ from the agent $i$'s perspective. If $j \in \group_1$, then we have $|\allocation_j | = 1$. Therefore,
\begin{align*}
\max_{\ite \in \allocation_j} \valu_i(\allocation_j \setminus \{\ite\}) = 0 \,,
\end{align*}
and the claim clearly holds.

Consider an agent $ j \in \group_2$. At the end of Step 2, agent $j$ has two allocated items. By Observation \ref{obs:vremaining}, the valuation of the item $\ite'_j$ for agent $i$ is bounded by $\valu_i(\allocation_i)/r_i = \valu_i(\ite_i)/r_i$. Therefore,
\begin{align*}
\max_{\ite \in \allocation_j} \valu_i(\allocation_j \setminus \{\ite\}) &= \max \big\{\valu_i(\ite_j), \valu_i(\ite'_j) \big\} \\
&\le \max \big\{\valu_i(\ite_j), \valu_i(\ite_i)/r_i \big\} &\text{Observation \ref{obs:vremaining}.} \\
&\le \max \big\{\valu_i(\ite_i), \valu_i(\ite_i)/r_i \big\} &
\text{Observation \ref{clm:valloc}.} \\
&= \valu_i(\ite_i) = \valu_i(\allocation_i) \,. &\text{$r_i>1$.} 
\end{align*}
Therefore the allocation is $\efx$.
\end{proof} 

\begin{restatable}{mclaim}{efxsts}
By the end of Step 2, the allocation is $(\phi-1)$-$\efx$ with respect to the agents in $\group_2$.
\end{restatable}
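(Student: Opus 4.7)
The plan is to mimic the case analysis used in the proof of the previous claim, splitting on the group of the other agent $j$. Fix $i \in \group_2$ and write $\allocation_i = \{\ite_i, \ite'_i\}$, where $\ite_i$ is the item received during the NSW matching in Step 1 and $\ite'_i$ is the item picked by $i$ in Step 2. We will use throughout that $\valu_i(\allocation_i) \ge \valu_i(\ite_i)$ and $1 \le r_i \le \phi$, the lower bound coming from the empty-path choice in the definition of envy-rank and the upper bound from $i \in \group_2$.

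If $j \in \group_1$ then $|\allocation_j| = 1$, so $\max_{\ite \in \allocation_j}\valu_i(\allocation_j \setminus \{\ite\}) = 0$ and the inequality is trivial. The substantive case is $j \in \group_2$, where $\allocation_j = \{\ite_j, \ite'_j\}$ and we must control
\begin{align*}
\max_{\ite \in \allocation_j}\valu_i(\allocation_j \setminus \{\ite\}) \;=\; \max\{\valu_i(\ite_j),\,\valu_i(\ite'_j)\}.
\end{align*}

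The heart of the proof is to upper-bound both $\valu_i(\ite_j)$ and $\valu_i(\ite'_j)$ by $\phi \cdot \valu_i(\ite_i)$. Observation~\ref{clm:valloc}, applied to the NSW matching, gives $\valu_i(\ite_j) \le (r_j/r_i)\,\valu_i(\ite_i) \le \phi\,\valu_i(\ite_i)$ since $r_j \le \phi$ and $r_i \ge 1$. Since $\ite'_j$ was unallocated right after Step 1, Observation~\ref{obs:vremaining} yields $\valu_i(\ite'_j) \le \valu_i(\ite_i)/r_i \le \valu_i(\ite_i)$. Invoking the golden-ratio identity $\phi(\phi-1)=1$ then gives $(\phi-1)\max\{\valu_i(\ite_j),\valu_i(\ite'_j)\} \le (\phi-1)\phi\,\valu_i(\ite_i) = \valu_i(\ite_i) \le \valu_i(\allocation_i)$, which is precisely the $(\phi-1)$-$\efx$ inequality.

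I do not anticipate a real obstacle: the proof essentially reveals why the cutoff $r_i = \phi$ was chosen to define $\group_2$, namely that the worst-case blowup $r_j/r_i \le \phi$ coming from Observation~\ref{clm:valloc} is exactly absorbed by the $(\phi-1)$ approximation slack. It is worth noting that neither the topological ordering $\mathcal{O}$ nor the fact that $\ite'_i$ is agent $i$'s most preferred remaining item at their turn will play a role in this argument — the trivial bound $\valu_i(\allocation_i) \ge \valu_i(\ite_i)$ already suffices, and all leverage comes from the two structural properties of NSW matchings established in Section~\ref{prelim}.
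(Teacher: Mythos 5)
Your proof is correct, but it takes a genuinely different---and simpler---route than the paper's. The paper splits into two cases according to whether $\valu_i(\ite_i) < \valu_i(\ite_j)$: in the first case it invokes the topological ordering $\mathcal{O}$ to conclude $\valu_i(\ite'_i) \ge \valu_i(\ite'_j)$ (agent $i$ picks his second item before agent $j$) and bounds the maximum by $\phi \cdot \valu_i(\allocation_i)$; in the second case it uses Observation \ref{obs:vremaining} to get an exact $\efx$ guarantee. You collapse both cases by bounding each of agent $j$'s items against $\valu_i(\ite_i)$ alone: $\valu_i(\ite_j) \le \phi\,\valu_i(\ite_i)$ from Observation \ref{clm:valloc} and $r_j \le \phi$, and $\valu_i(\ite'_j) \le \valu_i(\ite_i)$ from Observation \ref{obs:vremaining}, then close with the identity $\phi(\phi-1)=1$. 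This buys a cleaner argument and, as you note, exposes that neither the ordering $\mathcal{O}$ nor the greedy choice of $\ite'_i$ is needed for this particular claim (they are still needed for the Step 3 analysis, where the greedy pick is what makes every remaining item worth at most $\valu_i(\allocation_i)/2$ to agents in $\group_2$); what the paper's case analysis buys in exchange is only the marginally stronger conclusion that the allocation is exactly $\efx$ toward $i$ whenever $\valu_i(\ite_i) \ge \valu_i(\ite_j)$, which is never used downstream. One small remark: your appeal to $r_i \ge 1$ (via the empty path) is sound under the paper's definition of envy-rank, but it can be avoided altogether, since both observations are stated with a $\min$---namely $\valu_i(\ite_j)/\valu_i(\ite_i) \le \min\{r_j, r_j/r_i\} \le r_j$ and $\valu_i(\ite'_j) \le \min\{\valu_i(\ite_i), \valu_i(\ite_i)/r_i\} \le \valu_i(\ite_i)$---so your two key bounds hold with no assumption on $r_i$ whatsoever.
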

\begin{proof}
Let $i$ be an agent in $\group_2$, we show that for every other agent $j$  we have 
$$\valu_i(\allocation_i) \geq (\phi-1) \cdot \max_{\ite \in \allocation_j} \valu_i(\allocation_j \setminus \{\ite\}) \,,$$ 
so the allocation is $(\phi-1)$-$\efx$ from agent $i$'s perspective. If $j \in \group_1$, then we have $|\allocation_j | = 1$, and the claim clearly holds.

Consider an agent $j$ in $\group_2$. We first consider the case that $\valu_i(\ite_i) < \valu_i(\ite_j)$. In this case, the position of agent $i$ in $\mathcal{O}$ is before agent $j$. Therefore, agent $i$ receives his second good before agent $j$, and we have 
\begin{align}
\label{eq:xweak21f}
\valu_i(\ite'_i) \ge \valu_i(\ite'_j) \,.
\end{align}
It follows that
\begin{align*}
\max_{\ite \in \allocation_j} \valu_i(\allocation_j \setminus \{\ite\}) &= \max \big\{\valu_i(\ite_j), \valu_i(\ite'_j) \big\} \\
&\le \max \big\{\valu_i(\ite_j), \valu_i(\ite'_i) \big\} &\text{By (\ref{eq:xweak21f}).} \\
&\le \max \big\{r_j \cdot \valu_i(\ite_i), \valu_i(\ite'_i) \big\} &
\text{Observation \ref{clm:valloc}.} \\
&\le \max \big\{\phi \cdot \valu_i(\ite_i), \valu_i(\ite'_i) \big\} &
\text{$r_j \le \phi$.} \\
&\le \phi \cdot \big(\valu_i(\ite_i) + \valu_i(\ite'_i)\big) \\
&= \phi \cdot \valu_i(\allocation_i) \,. 
\end{align*}
Therefore,
\begin{align*}
\valu_i(\allocation_i) \ge \frac{1}{\phi} \cdot  \max_{\ite \in \allocation_j} \valu_i(\allocation_j \setminus \{\ite\}) \,.
\end{align*}
Since $\frac{1}{\phi} = \phi -1$, it follows that our allocation is $(\phi-1)$-EFX. The other case is when $\valu_i(\ite_i) \ge \valu_i(\ite_j)$. In this case, we have
\begin{align*}
\max_{\ite \in \allocation_j} \valu_i(\allocation_j \setminus \{\ite\}) &= \max \big\{\valu_i(\ite_j), \valu_i(\ite'_j) \big\} \\
&\le \max \big\{\valu_i(\ite_i), \valu_i(\ite'_j) \big\} &\text{$\valu_i(\ite_i) \ge \valu_i(\ite_j)$.} \\
&\le \max \big\{\valu_i(\ite_i), \valu_i(\ite_i) \big\} &\text{Observation \ref{obs:vremaining}.} \\
&= \valu_i(\ite_i) \le \valu_i(\allocation_i) \,.
\end{align*}
Therefore, in this case the allocation is $\efx$ which completes the proof of the claim.
\end{proof} 

\subsection{Step 3.}
In the third step, we use the envy-graph to allocate the rest of the items. We show that at the beginning of Step 3, the valuation of every remaining item is small for all the agents.
\begin{restatable}{observation}{efxsth}
Let $\allocation$ be the allocation after Step 2. Then, for an agent $i$ and every remaining item $\ite$ we have
	\begin{itemize}
		\item If $i \in \group_1$, $\valu_i(\ite) \le  \valu_i(\allocation_i)/\phi$.
		\item If $i \in \group_2$, $\valu_i(\ite) \le \valu_i(\allocation_i)/2$. 
	\end{itemize}
\end{restatable}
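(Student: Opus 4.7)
My plan is to derive both inequalities directly from Observation \ref{obs:vremaining} (the bound on unallocated items after $\nsw$ matching) combined with the greedy rule of Step 2, noting that in Step 2 of the $(\phi-1)$-$\efx$ algorithm only agents in $\group_2$ pick new items, and each picks his most valuable remaining good.

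For the case $i \in \group_1$, the argument is immediate: agent $i$ does not participate in Step 2, so $\allocation_i = \{\ite_i\}$ throughout. By Observation \ref{obs:vremaining} applied to the $\nsw$ matching, any item $\ite$ still unallocated at the end of Step 2 was unallocated right after Step 1, so $\valu_i(\ite) \le \valu_i(\allocation_i)/r_i$. Since $i \in \group_1$ means $r_i > \phi$, this gives $\valu_i(\ite) \le \valu_i(\allocation_i)/\phi$ as desired.

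For the case $i \in \group_2$, the key is to chain two inequalities. First, since $\ite'_i$ was agent $i$'s most valuable remaining item at the time he picked, and no items are returned during Step 2, for any item $\ite$ still unallocated at the end of Step 2 we have $\valu_i(\ite) \le \valu_i(\ite'_i)$. Second, $\ite'_i$ was itself an unallocated item immediately after Step 1 (when $\allocation_i = \{\ite_i\}$), so Observation \ref{obs:vremaining} yields $\valu_i(\ite'_i) \le \valu_i(\ite_i)$. Combining, $\valu_i(\allocation_i) = \valu_i(\ite_i) + \valu_i(\ite'_i) \ge 2\valu_i(\ite'_i) \ge 2\valu_i(\ite)$, which rearranges to $\valu_i(\ite) \le \valu_i(\allocation_i)/2$.

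There is no real obstacle here: the statement is essentially a bookkeeping consequence of the $\nsw$ matching bound plus the ``pick-the-largest-remaining'' rule, and the only subtle point to state carefully is that items only leave (never re-enter) the pool during Step 2, which is what lets us compare the value of a late-remaining item $\ite$ to the value of $\ite'_i$ at the moment agent $i$ picked it.
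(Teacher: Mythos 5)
Your proposal is correct and follows essentially the same route as the paper: the $\group_1$ case is the direct application of Observation \ref{obs:vremaining} with $r_i > \phi$, and the $\group_2$ case uses the fact that both items $\ite_i$ and $\ite'_i$ dominate every remaining item (via Observation \ref{obs:vremaining} and the greedy pick in Step 2), so the bundle is worth at least twice any remaining good. Your write-up merely makes explicit the bookkeeping point (items never re-enter the pool) that the paper's terser proof leaves implicit.
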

\begin{proof}
Consider an agent $i \in \group_1$, then by Observation \ref{obs:vremaining} we have 
\begin{align*}
\valu_i(\ite) \le \valu_i(\allocation_i)/r_i \le \valu_i(\allocation_i)/\phi \,.
\end{align*}
Next, consider an agent $i \in \group_2$. This agent has two allocated items which are larger than every remaining item. Therefore, $\valu_i(\ite) \le \valu(\allocation_i)/2$.
\end{proof}

Therefore, for every agent $i$, the valuation of every remaining item is at most $\valu_i(\allocation_i)/\phi$. In order to complete the allocation we repeat the following steps until all the goods are allocated.
\begin{itemize}
	\item Find and eliminate all the directed cycles from the envy-graph.
	\item Allocate an item to an agent that no-one envies to him. 
\end{itemize}

\begin{lemma} [\cite{plaut2018almost}]
	Suppose that we are given a partial $\alpha$-$\efx$ allocation $\allocation$ such that for every agent $i$ and every remaining item $\ite$, we have $\valu_{i}(\ite) \le \alpha' \cdot (\allocation_i)$ for some constant $\alpha' \le 1$. Then, the resulting allocation after performing the method mentioned above is $\min\{\alpha, \frac{1}{1+\alpha'}\}$-$\efx$.
\end{lemma}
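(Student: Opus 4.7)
The plan is to show that each of the two basic operations in the loop — (i) cycle elimination in the envy-graph and (ii) allocating a best remaining item to a source — preserves both (a) a $\min\{\alpha,\tfrac{1}{1+\alpha'}\}$-$\efx$ guarantee, and (b) the hypothesis that every still-unallocated item $\ite$ satisfies $\valu_i(\ite)\le\alpha'\cdot \valu_i(\allocation_i)$ for every agent $i$. Since the loop terminates with a complete allocation, invariance of these two conditions yields the claim.

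\textbf{Cycle elimination.} When we rotate bundles along a directed cycle $i_1\to i_2\to\cdots\to i_k\to i_1$ in the envy-graph, each agent $i_j$ exchanges $\allocation_{i_j}$ for $\allocation_{i_{(j\bmod k)+1}}$, which by definition of the envy-graph has strictly larger value to $i_j$. Hence every affected $\valu_{i_j}(\allocation_{i_j})$ can only increase (and agents outside the cycle are unchanged). This immediately implies two things. First, condition (b) is strengthened, because the upper bound $\alpha'\cdot\valu_i(\allocation_i)$ only grows. Second, the $\alpha$-$\efx$ inequality $\valu_i(\allocation_i)\ge \alpha\cdot \max_{b\in\allocation_j}\valu_i(\allocation_j\setminus\{b\})$ is preserved: the left-hand side does not decrease, and the set of bundles appearing on the right is simply a permutation of the former set.

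\textbf{Source allocation.} Let $s$ be a source of the envy-graph (after cycle elimination) and let $\ite^\star$ be a remaining item maximising $\valu_s$. For agent $s$, allocating $\ite^\star$ only enlarges his bundle, so his side of every $\efx$ comparison is strengthened. For any other agent $i$, we must bound
\[
M \;:=\; \max_{b\in \allocation_s\cup\{\ite^\star\}} \valu_i\!\bigl(\allocation_s\cup\{\ite^\star\}\setminus\{b\}\bigr) \;=\; \valu_i(\allocation_s)+\valu_i(\ite^\star)-\min_{b\in\allocation_s\cup\{\ite^\star\}}\valu_i(b).
\]
Split on which item attains the minimum. If $\ite^\star$ does, then $M=\valu_i(\allocation_s)$, and since $s$ is a source we have $\valu_i(\allocation_i)\ge \valu_i(\allocation_s)=M$, giving the EFX inequality with constant $1$. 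Otherwise some $b'\in\allocation_s$ has $\valu_i(b')\le \valu_i(\ite^\star)$, and then
\[
M \;\le\; \valu_i(\allocation_s)+\valu_i(\ite^\star) \;\le\; \valu_i(\allocation_i)+\alpha'\valu_i(\allocation_i) \;=\; (1+\alpha')\,\valu_i(\allocation_i),
\]
where the second inequality uses the source property together with the hypothesis $\valu_i(\ite^\star)\le \alpha'\valu_i(\allocation_i)$ (this is where we invoke invariant (b) for the item just handed out). Hence $\valu_i(\allocation_i)\ge \tfrac{1}{1+\alpha'}M$, and combined with the untouched comparisons (which retain the pre-existing $\alpha$-$\efx$ bound) we obtain invariant (a) with constant $\min\{\alpha,\tfrac{1}{1+\alpha'}\}$. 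Finally, giving $\ite^\star$ to $s$ does not decrease any $\valu_i(\allocation_i)$ and strictly removes $\ite^\star$ from the remaining pool, so invariant (b) persists for all subsequent iterations.

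\textbf{Main obstacle.} The only delicate point is the source step: we must argue the EFX comparison for agents envying the newly-enlarged $\allocation_s$. The source property kills the term $\valu_i(\allocation_s)$, and the $\alpha'$-smallness hypothesis kills the extra $\valu_i(\ite^\star)$, so the $\min$ of the two is exactly the $\tfrac{1}{1+\alpha'}$ that appears in the statement. Everything else is bookkeeping to verify that cycle elimination never breaks the invariants and that sources continue to exist after each cycle elimination (because an acyclic directed graph always has a source), which together make the loop well-defined and terminating.
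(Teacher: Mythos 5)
Your proof is correct and follows essentially the same route as the paper's own argument for the $\efr$ analogue (Lemma \ref{lem:efxs3}): cycle elimination preserves the guarantee because the bundles are merely permuted while each agent's own value weakly increases, and the source step yields $\valu_i(\allocation_s)+\valu_i(\ite^\star)\le(1+\alpha')\cdot\valu_i(\allocation_i)$ from the no-envy and small-item hypotheses, exactly the $\frac{1}{1+\alpha'}$ bound. Your write-up is slightly more self-contained than the paper's --- you prove the cycle-elimination preservation (which the paper delegates to \emph{Lemma 6.1} of \cite{plaut2018almost}) and make explicit the invariant that remaining items stay $\alpha'$-small because own-bundle values never decrease --- but the underlying decomposition is identical.
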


Recall that our allocation by the end of Step 2 is $(\phi-1)$-$\efx$. Therefore, by lemma above the approximation ratio of our approach is $\min\{ \phi -1, \frac{1}{1+1/\phi} \}$. Since $\frac{1}{1+1/\phi} = \phi -1$, it follows that our final allocation is $(\phi-1) \approx 0.61$-$\efx$.

%
%
%

\bibliography{ms}

\newpage
\appendix
\section{Missing proofs}\label{sec:appen}
\subsection{Missing proofs of Section \ref{prelim}}
\obsf*
\begin{proof}
		Assume $c = i_1 \rightarrow i_2 \rightarrow \ldots \rightarrow  i_k \rightarrow i_1$ 
 is an improving cycle. Then, it is easy to see that rotating  the goods over this cycle (i.e., reallocating $\allocation_{i_j}$ to agent $i_{j-1}$ for every $1< j \leq k$, and reallocating $\allocation_{i_1}$ to  agent $i_k$) yields a matching with a higher Nash social welfare.
\end{proof}
\valloc*
\begin{proof}
In the envy-ratio graph, the weight of the directed edge from $i$ to $j$ is $w_{i,j}= \frac{\valu_i(\allocation_j)}{\valu_i(\allocation_i)}$. Recall that $r_j$ is the maximum product of the weights of the edges in a path leading to $j$.  Since the edge from $i$ to $j$ is also a path leading to vertex $j$, we have $w_{i,j} \leq r_j$. Therefore, $\frac{\valu_i(\allocation_j)}{\valu_i(\allocation_i)} \le r_j$. Now consider a path $p$ leading to $i$ with the maximum product of the weights of the edges. Based on the definition of envy-rank, the product of the weights of the edges in $p$ is $r_i$. We can use the edge from $i$ to $j$ to extend  this path. This new path leads to $j$, and its product of the weights of the edges is $r_i \cdot w_{i,j}$. Therefore, we can say $r_j \geq r_i \cdot w_{i,j}$. Hence,
\begin{align*}
\frac{\valu_i(\allocation_j)}{\valu_i(\allocation_i)} = w_{i,j} \leq \frac{r_j}{r_i} \,.
\end{align*}
\end{proof}
\vremaining*
	\begin{proof}
First, for any agent $i$ and any remaining good $b$, we have $\valu_{i}(b) \leq \valu_i(\allocation_i)$, if not
$\nsw$ can be increased by giving $b$ to $i$ instead of $\allocation_i$. Moreover, consider a path $p=i_1 \rightarrow i_2 \rightarrow \ldots i_k \rightarrow i$ in the envy-ratio graph leading to $i$ with the maximum product of the weights of the edges. By the definition of the envy-rank the product of the weights of this path is $r_i$. By moving items along this path (giving $\allocation_i$ to $i_k$, $\allocation_{i_k}$ to $i_{k-1}$, etc.) and giving $b$ to agent $i$, the $\nsw$ will be multiplied by a factor of $r_i \cdot \frac{\valu_i(b)}{\valu_i(\allocation_i)}$. Since $\allocation$ is the allocation that maximizes $\nsw$, we have $r_i \cdot \frac{\valu_i(b)}{\valu_i(\allocation_i)} \leq 1$, and hence $\valu_i(b) \leq \frac{\valu_i(\allocation_i)}{r_i} $.
	\end{proof}
\subsection{Missing proofs of Section \ref{sec:efr}}
\poly*
\begin{proof}
Let $G=(U_1,U_2)$ be a bipartite graph that has a vertex for each agent in $U_1$ and has a vertex for every item in $U_2$. For every agent $i$ and every item $b$ we add an undirected edge with the weight of $\log \valu_i(\{b\})$ between their corresponding vertices. By finding a maximum weighted matching in this graph, we get an allocation $\allocation$ such that every agent has at most one allocated item. Also, this allocation maximizes $\sum_{i=1}^{n} \log \valu_i(\allocation_i)$. Therefore, this allocation also maximizes $\prod_{i=1}^{n} \valu_i(\allocation_i)$. Hence, $\allocation$ allocates at most one item to every agent and maximizes Nash social welfare.
\end{proof}
\stefr*
\begin{proof}
	The algorithm repeats the following steps until it allocates all items.
	\begin{itemize}
		\item Find and eliminate all the directed cycles from the envy-graph.
		\item Give an item to an agent that no-one envies.
	\end{itemize}
Consider the step in which the algorithm eliminates cycles. As we discussed earlier, this step does not change the approximation factor of the algorithm. Hence, if the allocation is $\alpha$-$\efr$ before this step, it remains $\alpha$-$\efr$ after it (See \textit{Lemma 6.1} in \cite{plaut2018almost} for more detail).

Consider the second step of the algorithm. In this step, the algorithm finds an agent such that no-one envies this agent, and it allocates an item to this agent. Suppose our algorithm allocates item $b$ to agent $i$. Since no-one envies agent $i$ before this step, for every other agent $j$, we have $\valu_j(\allocation_i) \leq \valu_j(\allocation_j)$ where $\allocation$ is the allocation of items before this step. In addition we have $\valu_j(b) \le \alpha' \cdot \valu_j(\allocation_j)$ since item $b$ was among unallocated items at the beginning of this step. Thus, we have
\begin{align*}
\valu_j(\allocation_i) + \valu_j(b) \le (1 + \alpha') \cdot  \valu_j(\allocation_j) \,.
\end{align*}
This means that after allocation $b$, no agent $j$ thinks the value of the bundle of agent $i$ is  $(1 + \alpha')$ times more than the valuation his bundle. Therefore, the allocation remains $\frac{1}{1+\alpha'}$-$\efr$.
 Hence, the final allocation is $\min\{\alpha, \frac{1}{1+\alpha'}\}$-$\efr$.
\end{proof}
\efrsth*
\begin{proof}
Consider an agent $i \in \group_1$, then by Observation \ref{obs:vremaining} we have 
\begin{align*}
\valu_i(\ite) \le \valu_i(\allocation_i)/r_i \le \valu_i(\allocation_i)/\varphi \,.
\end{align*}
Next consider an agent $i \in \group_2$. This agent has two allocated items. Let $\allocation_i = \{ \ite_i, \ite'_i \}$ be these items where $\ite_i$ is the item allocated using $\nsw$ matching. Since agent $i$ picks his best remaining item at Step 2 of our algorithm, we have
\begin{align}
\label{eq:s3g2-1}
\valu_i(\ite) \le \valu_i(\ite'_i) \,.
\end{align}
Since $\ite_i$ is allocated by $\nsw$ matching, by Observation \ref{obs:vremaining} we have
\begin{align}
\label{eq:s3g2-2}
\valu_i(\ite) &\le \valu_i(\ite_i)/r_i \nonumber \\ 
& \le \valu_i(\ite_i)/2 \,. &\text{Since $r_i>2$.}
\end{align}
It follows from (\ref{eq:s3g2-1}) and (\ref{eq:s3g2-2}) that
\begin{align*}
\valu_i(\ite) \le \big(\valu_i(\ite_i) + \valu_i(\ite'_i)\big)/3 = \valu_i(\allocation_i)/3 \,.
\end{align*}
The last case is when $i \in \group_3$. In this case agent $i$ has three allocated items which are all larger than every remaining item. Therefore $\valu_i(\ite) \le \valu(\allocation_i)/3$.
\end{proof}
\newpage
\section{Proof of Claim \ref{clm:secstep}} \label{proofsec}
After the Step 2 of the algorithm, the bundle of every agent in $\group_3$ contains three items. For an agent $i \in \group_3$ we use $\{\ite_i, \ite'_i, \ite''_i\}$ to denote the items allocated to this agent where $\ite'_i$ and $\ite''_i$ are the items allocated in Step $2$ and $\ite'_i$ has been allocated before $\ite''_i$. Also, the bundle of every agent in $\group_2$ contains two items. Similarly, for an agent $i \in \group_2$ we use $\{\ite_i,  \ite'_i\}$ to denote the allocated items of this agent where $\ite'_i$ is the item received in Step $2$. We also use $\{\ite_i\}$ to denote the only item received by an agent $i \in \group_1$.

We begin our analysis by showing the following claim.

\begin{mclaim}
\label{clm:exsum}
For an allocation $\allocation$ and agents $i$ and $j$, we have
\begin{align*}
\mathop{\mathlarger{\mathbb{E}}}_{\ite \sim \distribution_j} \big[\valu_i(\allocation_j \setminus \{\ite\})\big] = \frac{|\allocation_j|-1}{|\allocation_j|} \cdot \valu_i(\allocation_j) \,.
\end{align*}
\end{mclaim}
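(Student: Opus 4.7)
The plan is a direct computation that uses two ingredients already available to us: the additivity of $\valu_i$, which was assumed in the preliminaries, and the fact that $\distribution_j$ is by definition the uniform distribution over the $|\allocation_j|$ items in $\allocation_j$.

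First, I would use additivity to rewrite the random quantity inside the expectation as $\valu_i(\allocation_j \setminus \{\ite\}) = \valu_i(\allocation_j) - \valu_i(\ite)$, which is a deterministic quantity minus a linear functional of the random item $\ite$. Applying linearity of expectation then gives
\[
\mathop{\mathlarger{\mathbb{E}}}_{\ite \sim \distribution_j} \big[\valu_i(\allocation_j \setminus \{\ite\})\big] = \valu_i(\allocation_j) - \mathop{\mathlarger{\mathbb{E}}}_{\ite \sim \distribution_j}\big[\valu_i(\ite)\big].
\]

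Next, I would evaluate the remaining expectation. Since $\distribution_j$ is uniform, every item of $\allocation_j$ is drawn with probability $1/|\allocation_j|$, so $\mathop{\mathlarger{\mathbb{E}}}_{\ite \sim \distribution_j}[\valu_i(\ite)] = \frac{1}{|\allocation_j|} \sum_{\ite \in \allocation_j} \valu_i(\ite) = \valu_i(\allocation_j)/|\allocation_j|$, again by additivity. Substituting and factoring $\valu_i(\allocation_j)$ gives the claimed identity $\frac{|\allocation_j|-1}{|\allocation_j|} \cdot \valu_i(\allocation_j)$.

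There is no real obstacle: the statement is a one-line consequence of additivity plus linearity of expectation under a uniform distribution, and the proof consists of exactly the two substitutions above. The only thing to be mindful of is the degenerate case $|\allocation_j|=0$, where the expectation is vacuous; but in that case both sides are naturally zero, so the identity holds trivially.
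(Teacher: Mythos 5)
Your proof is correct and takes essentially the same approach as the paper: a direct computation from additivity and the uniformity of $\distribution_j$. The paper expands the expectation into a double sum and counts that each item appears $|\allocation_j|-1$ times, whereas you subtract the expected value of the removed item; these are two phrasings of the same one-line counting argument (your aside about $|\allocation_j|=0$ is moot, since $\distribution_j$ is only defined for nonempty bundles).
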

\begin{proof}
Distribution $\distribution_j$ selects each item in $\allocation_j$ with the probability of $1/|\allocation_j|$. Therefore,
\begin{align*}
\mathop{\mathlarger{\mathbb{E}}}_{\ite \sim \distribution_j} \big[\valu_i(\allocation_j \setminus \{\ite\})\big] &= \frac{1}{|\allocation_j|} \cdot \sum_{\ite \in \allocation_j} \valu_i(\allocation_j \setminus \{\ite\}) \\
&=\frac{1}{|\allocation_j|} \cdot \sum_{\ite \in \allocation_j} \sum_{\ite' \in \allocation_j \setminus \{\ite\}}  \valu_i(\{\ite'\}) \,. & \text{By Additivity assumption.}
\end{align*}
Each item in $\allocation_j$ appears $|\allocation_j|-1$ times in the above summation. Therefore,
\begin{align*}
\mathop{\mathlarger{\mathbb{E}}}_{\ite \sim \distribution_j} \big[\valu_i(\allocation_j \setminus \{\ite\})\big] &=\frac{1}{|\allocation_j|} \cdot \sum_{\ite \in \allocation_j} \sum_{\ite' \in \allocation_j \setminus \{\ite\}}  \valu_i(\{\ite'\}) \\
&= \frac{|\allocation_j|-1}{|\allocation_j|} \cdot \sum_{\ite \in \allocation_j} \valu_i(\{\ite\}) = \frac{|\allocation_j|-1}{|\allocation_j|} \cdot \valu_i(\allocation_j) \,.
\end{align*}
\end{proof}

First we show that at the end of Step 2, the allocation is $\efr$ for the agents in $\group_1$.
\begin{mclaim}
By the end of Step 2, the allocation is $\EFR$ with respect to the agents in $\group_1$.
\end{mclaim}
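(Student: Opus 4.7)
Fix an agent $i \in \group_1$ and an arbitrary other agent $j$; our goal is to show $\valu_i(\allocation_i) \ge \mathop{\mathbb{E}}_{\ite \sim \distribution_j}[\valu_i(\allocation_j \setminus \{\ite\})]$. Since $i \in \group_1$ the bundle $\allocation_i$ remained untouched during Step 2, so $\allocation_i = \{\ite_i\}$ and $\valu_i(\allocation_i) = \valu_i(\ite_i)$. By Claim \ref{clm:exsum}, the expected term equals $\frac{|\allocation_j|-1}{|\allocation_j|} \valu_i(\allocation_j)$, so it suffices to upper bound $\valu_i(\allocation_j)$ term by term according to which group $j$ belongs to.

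The plan is to split into three cases. If $j \in \group_1$, then $|\allocation_j|=1$ and the expectation is $0$, so the inequality is immediate. If $j \in \group_2$, then $\allocation_j = \{\ite_j, \ite'_j\}$; I will bound $\valu_i(\ite_j)$ using Observation \ref{clm:valloc} (applied to the $\nsw$ matching) by $\frac{r_j}{r_i}\valu_i(\ite_i)$, and bound $\valu_i(\ite'_j)$ using Observation \ref{obs:vremaining} (since $\ite'_j$ was unallocated when agent $j$ picked it in Step 2, hence even more so after the $\nsw$ matching) by $\frac{1}{r_i}\valu_i(\ite_i)$. Plugging into $\tfrac{1}{2}\valu_i(\allocation_j)$ reduces the desired inequality to $\frac{r_j+1}{2 r_i} \le 1$, which holds because $r_j \le \varphi$ and $r_i > \varphi = \sqrt{3}+1$, so $r_j + 1 \le \sqrt{3}+2 < 2\sqrt{3}+2 < 2 r_i$.

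If $j \in \group_3$, then $\allocation_j = \{\ite_j, \ite'_j, \ite''_j\}$ and the same two observations give $\valu_i(\ite_j) \le \tfrac{r_j}{r_i}\valu_i(\ite_i)$ and $\valu_i(\ite'_j), \valu_i(\ite''_j) \le \tfrac{1}{r_i}\valu_i(\ite_i)$. The desired inequality $\tfrac{2}{3}\valu_i(\allocation_j) \le \valu_i(\ite_i)$ reduces to $\frac{2(r_j+2)}{3 r_i} \le 1$; since $r_j \le 2$ and $r_i > \varphi$, it suffices that $r_i > 8/3$, which holds because $\varphi = \sqrt{3}+1 \approx 2.732 > 8/3$.

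The only subtlety, and the main thing to verify carefully, is that Observation \ref{obs:vremaining} legitimately applies to the items $\ite'_j$ and $\ite''_j$ added during Step 2: these items were among the unallocated items immediately after the $\nsw$ matching of Step 1, hence the bound $\valu_i(\cdot) \le \valu_i(\allocation_i)/r_i$ from Observation \ref{obs:vremaining} (which is formulated for items unallocated by the $\nsw$ matching) applies directly. Once this is in place, the numerical checks above, driven by the defining threshold $\varphi = \sqrt{3}+1$ for $\group_1$ and the thresholds $2$ and $\varphi$ for $\group_3$ and $\group_2$, finish the proof.
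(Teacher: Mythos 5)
Your proposal is correct and takes essentially the same route as the paper's proof: the same three-way case split on the group of $j$, the same reduction via Claim \ref{clm:exsum}, the same use of Observation \ref{clm:valloc} for $\ite_j$ and Observation \ref{obs:vremaining} for the items added in Step 2, and the same numerical checks against $\varphi=\sqrt{3}+1$. The only cosmetic difference is that in the $\group_2$ case you keep the sharper bound $\valu_i(\ite_j) \le \tfrac{r_j}{r_i}\valu_i(\ite_i)$ where the paper relaxes it to $\valu_i(\ite_i)$; both yield the claim.
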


\begin{proof}
Let $i$ be an agent in $\group_1$, we show that for every other agent $j$  we have 
$$\E_{\ite \sim \distribution_j} \big[\valu_i(\allocation_j \setminus \{\ite\})\big]\le \valu_i(\allocation_i) ,$$
 so the allocation is $\EFR$ from the agent $i$'s perspective.
\begin{itemize}
\item If $j \in \group_1$, then we have $|\allocation_j | = 1$ and the claim clearly holds.
\item If $j \in \group_2$, then at the end of Step 2, agent $j$ has two allocated items. By Observation \ref{obs:vremaining} the valuation of the item $\ite'_j$ for agent $i$ is bounded by $\valu_i(\allocation_i)/r_i = \valu_i(\ite_i)/r_i$. Therefore,
\begin{align*}
\E_{\ite \sim \distribution_j} \big[\valu_i(\allocation_j \setminus \{\ite\})\big] &= \frac{\valu_i(\ite_j) + \valu_i(\ite'_j)}{2} & \text{Claim \ref{clm:exsum}.}
\\
&\le \frac{\valu_i(\ite_j) + \valu_i(\ite_i)/r_i}{2} & \text{ Observation \ref{obs:vremaining}.} \\
&\le \frac{\valu_i(\ite_i) + \valu_i(\ite_i)/r_i}{2} & \text{ Observation \ref{clm:valloc}.} \\
&=  \frac{1+1/r_i}{2} \cdot \valu_i(\ite_i)\,.
\end{align*}
Since agent $i$ is in $\group_1$, we have $r_i \ge \varphi$. It follows that
\begin{align*}
\E_{\ite \sim \distribution_j} \big[\valu_i(\allocation_j \setminus \{\ite\})\big] &\le \frac{1+1/r_i}{2} \cdot \valu_i(\ite_i) \\
&\le   \frac{1+1/\varphi}{2} \cdot \valu_i(\ite_i) \\
&= \frac{\varphi+1}{2\varphi} \cdot \valu_i(\ite_i) \,.
\end{align*}
Since $\varphi=\sqrt{3}+1$, we have $\frac{\varphi+1}{2\varphi} <1$. Therefore,
\begin{align*}
\E_{\ite \sim \distribution_j} \big[\valu_i(\allocation_j \setminus \{\ite\})\big] &\le \frac{\varphi+1}{2\varphi} \cdot \valu_i(\ite_i)  <  \valu_i(\ite_i) = \valu_i(\allocation_i) \,.
\end{align*}
Therefore, in this case the allocation is $\efr$.
\item The only remaining case is when agent $j$ is in $\group_3$. By Observation \ref{obs:vremaining}, valuation of $\ite'_j$ and $\ite''_j$ for agent $i$ is at most $\valu_i(\ite_i)/r_i$. Also, by Observation \ref{clm:valloc}, valuation of $\ite_j$ for agent $i$ is at most $r_j \cdot \valu_i(\ite_i)/r_i$. Therefore,
\begin{align*}
\E_{\ite \sim \distribution_j} \big[\valu_i(\allocation_j \setminus \{\ite\})\big] &= \frac{2}{3} \cdot \big(\valu_i(\ite_j) + \valu_i(\ite'_j) + \valu_i(\ite''_j) \big) & \text{Claim \ref{clm:exsum}.}
\\
&\le \frac{2}{3} \cdot \big(\valu_i(\ite_j) + 2\valu_i(\ite_i)/r_i\big) & \text{Observation \ref{obs:vremaining}.} \\
&\le \frac{2}{3} \cdot \big(r_j \cdot \valu_i(\ite_i)/r_i + 2\valu_i(\ite_i)/r_i\big) & \text{Observation \ref{clm:valloc}.} \\
&=  \frac{2 r_j/ r_i +4/r_i}{3} \cdot \valu_i(\ite_i)  \,.
\end{align*}
Recall that agents $i$ and $j$ are in $\group_1$ and $\group_3$ respectively. Therefore, $r_i \ge \varphi$ and $r_j \le 2$. We then have
\begin{align*}
\E_{\ite \sim \distribution_j} \big[\valu_i(\allocation_j \setminus \{\ite\})\big] &\le \frac{2 r_j/ r_i +4/r_i}{3} \cdot \valu_i(\ite_i)\\
  &\le \frac{8/\varphi}{3} \cdot \valu_i(\ite_i)
  \end{align*}
  Since $\varphi=\sqrt{3}+1$, we have $\frac{8/\varphi}{3} <1$. Therefore,
\begin{align*}
\E_{\ite \sim \distribution_j} \big[\valu_i(\allocation_j \setminus \{\ite\})\big] \le  \frac{8/\varphi}{3} \cdot \valu_i(\ite_i) < \valu_i(\ite_i) = \valu_i(\allocation_i) \,.
\end{align*}
Therefore the allocation is $\EFR$.
\end{itemize}
\end{proof} 

Now we show that at the end of Step 2, the allocation is $(3/4)$-$\EFR$ with respect to the agents in $\group_2$.

\begin{mclaim}
By the end of Step 2, the allocation is $(3/4)$-$\EFR$ with respect to the agents in $\group_2$.
\end{mclaim}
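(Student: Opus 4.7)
The plan is to mirror the structure of the preceding claim (the $\group_1$ case) by splitting on which group agent $j$ belongs to. Throughout, for an agent $i \in \group_2$ with $2 < r_i \le \varphi$, I would exploit the slack $\valu_i(\allocation_i) \ge \valu_i(\ite_i)$, expand the expected leftover value via Claim \ref{clm:exsum}, and then bound each constituent term using Observation \ref{clm:valloc} on the NSW-matched item $\ite_j$ and Observation \ref{obs:vremaining} on every item agent $j$ received during Step 2 (these were unallocated after Step 1, so the observation applies).

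The subcase $j \in \group_1$ is trivial since $|\allocation_j|=1$ makes the expected leftover value zero. For $j \in \group_2$, combining the above observations with Claim \ref{clm:exsum} would yield
\begin{align*}
\E_{\ite \sim \distribution_j}\bigl[\valu_i(\allocation_j \setminus \{\ite\})\bigr] \le \frac{r_j+1}{2 r_i}\valu_i(\ite_i) \le \frac{\varphi+1}{4}\valu_i(\ite_i),
\end{align*}
using $r_j \le \varphi$ and $r_i > 2$. Since $(\varphi+1)/4 < 4/3$, this comfortably gives $\valu_i(\allocation_i) \ge (3/4)\E_{\ite \sim \distribution_j}[\valu_i(\allocation_j \setminus \{\ite\})]$ in this subcase.

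The only tight subcase is $j \in \group_3$. Because both $\ite'_j$ and $\ite''_j$ were picked in Step 2, Observation \ref{obs:vremaining} bounds each by $\valu_i(\ite_i)/r_i$, while Observation \ref{clm:valloc} gives $\valu_i(\ite_j) \le (r_j/r_i)\valu_i(\ite_i)$. Combining with Claim \ref{clm:exsum} would yield
\begin{align*}
\E_{\ite \sim \distribution_j}\bigl[\valu_i(\allocation_j\setminus\{\ite\})\bigr] \le \frac{2(r_j+2)}{3 r_i}\valu_i(\ite_i) \le \frac{4}{3}\valu_i(\ite_i) \le \frac{4}{3}\valu_i(\allocation_i),
\end{align*}
where the middle step uses $r_j \le 2$ and $r_i > 2$. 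Rearranging delivers exactly $(3/4)$-$\efr$ in this subcase.

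I expect the $\group_3$ case to be the main obstacle: the bound is tight and directly pins down the threshold $r_i = 2$ separating $\group_2$ from $\group_3$ (so that the factor $2(r_j+2)/(3 r_i)$ saturates at $4/3$ in the worst corner). The only subtlety is checking that the items agent $j$ acquired in Step 2 are covered by Observation \ref{obs:vremaining}, which is immediate since those items were not allocated by the NSW matching.
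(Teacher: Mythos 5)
Your proposal is correct, and it follows the paper's overall plan: the same case analysis on the group of agent $j$, the same use of Claim \ref{clm:exsum} to rewrite the expectation as $\frac{|\allocation_j|-1}{|\allocation_j|}\cdot\valu_i(\allocation_j)$, and the same two bounds (Observation \ref{clm:valloc} for the matched item $\ite_j$, Observation \ref{obs:vremaining} for the items picked in Step 2, which indeed applies to them since they were unallocated after the $\nsw$ matching). Your $\group_1$ and $\group_3$ cases coincide with the paper's; in particular your $\group_3$ computation $\frac{2(r_j+2)}{3r_i}\valu_i(\ite_i) \le \frac{4}{3}\valu_i(\ite_i)$ is, up to rearrangement, exactly the paper's tight case. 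The one genuine difference is the $j \in \group_2$ case: the paper splits it into two sub-cases according to whether $\valu_i(\ite_i) < \valu_i(\ite_j)$, and in the envious sub-case invokes the picking order (agent $i$ precedes $j$ in the topological ordering $\mathcal{O}$, hence $\valu_i(\ite'_i) \ge \valu_i(\ite'_j)$). You avoid the ordering argument entirely by applying Observation \ref{clm:valloc} directly, obtaining the uniform bound $\frac{r_j+1}{2r_i}\valu_i(\ite_i) \le \frac{\varphi+1}{4}\valu_i(\ite_i) < \valu_i(\allocation_i)$; this is shorter and still yields full $\efr$ (factor $1$) in that case, which is what the paper's two sub-cases establish anyway. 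Your shortcut is valid here precisely because $r_i > 2$ for every $i \in \group_2$, so Observation \ref{obs:vremaining} already makes Step-2 items small relative to $\valu_i(\ite_i)$; note that the ordering argument is not globally redundant in the paper --- it is essential for the companion claim about agents in $\group_3$, where $r_i$ may be close to $1$ and Observation \ref{obs:vremaining} alone is too weak --- but for the present claim your simplification goes through.
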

\begin{proof}
Let $i$ be an agent in $\group_2$, we show that for every other agent $j$, we have 
$$\E_{\ite \sim \distribution_j} \big[\valu_i(\allocation_j \setminus \{\ite\})\big] \le 4/3 \cdot \valu_i(\allocation_i) \,,$$
therefore the allocation is $(3/4)$-$\efr$. 
\begin{itemize}
\item For an agent $j$ in $\group_1$, only one item is allocated to this agent and the claim clearly holds.

\item Consider an agent $j$ in $\group_2$. Recall that $\allocation_j=\{\ite_j, \ite'_j\}$ is the bundle of this agent. We first consider the case that $\valu_i(\ite_i) < \valu_i(\ite_j)$. In this case the position of agent $i$ in the topological order $\mathcal{O}$ is before agent $j$. Therefore, agent $i$ receives his second good before agent $j$, and we have 
\begin{align}
\label{eq:g22s}
\valu_i(\ite'_i) \ge \valu_i(\ite'_j) \,.
\end{align}
Moreover, by Observation \ref{clm:valloc} we have
\begin{align}
\label{eq:g22m}
\frac{\valu_i(\ite_j)}{\valu_i(\ite_i)} \le \frac{r_j}{r_i} < \frac{\varphi}{2} \,.
\end{align} 
It follows that
\begin{align*}
\E_{\ite \sim \distribution_j} \big[\valu_i(\allocation_j \setminus \{\ite\})\big] &=
\frac{\valu_i(\ite_j) + \valu_i(\ite'_j)}{2} & \text{Claim \ref{clm:exsum}.} \\
&\le \frac{\valu_i(\ite_j) + \valu_i(\ite'_i)}{2}  & \text{By (\ref{eq:g22s}).} \\
&< \frac{\varphi/2 \cdot \valu_i(\ite_i) + \valu_i(\ite'_i)}{2} & \text{ By (\ref{eq:g22m}).} \\
&\le \frac{\varphi}{4} \cdot \big(\valu_i(\ite_i)+\valu_i(\ite'_i)\big) & \text{ Since $\varphi/2>1$.} \\
\end{align*}
Since $\varphi=\sqrt{3}+1$, we have $\frac{\varphi}{4} <1$. Therefore,
\begin{align*}
\E_{\ite \sim \distribution_j} \big[\valu_i(\allocation_j \setminus \{\ite\})\big] & \le \frac{\varphi}{4} \cdot \big(\valu_i(\ite_i)+\valu_i(\ite'_i)\big) \\
&< \valu_i(\ite_i)+\valu_i(\ite'_i) = \valu_i(\allocation_i) \,.
\end{align*}
Therefore, if $\valu_i(\ite_i) < \valu_i(\ite_j)$, the allocation is $\EFR$. The other case is when $\valu_i(\ite_i) \ge \valu_i(\ite_j)$. In that case we have
\begin{align*}
\E_{\ite \sim \distribution_j} \big[\valu_i(\allocation_j \setminus \{\ite\})\big] &=
\frac{\valu_i(\ite_j) + \valu_i(\ite'_j)}{2} & \text{Claim \ref{clm:exsum}.} \\
&\le \frac{\valu_i(\ite_j) + \valu_i(\ite_i)/r_i}{2} &\text{ Observation \ref{obs:vremaining}.} \\
&\le  \frac{\valu_i(\ite_i) + \valu_i(\ite_i)/r_i}{2} &\text{ $\valu_i(\ite_i) \ge \valu_i(\ite_j)$.} \\
&= \frac{1+1/r_i}{2} \cdot \valu_i(\ite_i) \\
&< \frac{3}{4} \cdot \valu_i(\ite_i) \le \frac{3}{4} \cdot \valu_i(\allocation_i)  \,, & \text{$r_i>2$.}  
\end{align*}
therefore the allocation in this case is $\EFR$.

\item The only remaining case is when agent $j$ is in $\group_3$.  By Observation \ref{obs:vremaining}, valuation of $\ite'_j$ and $\ite''_j$ for agent $i$ is at most $\valu_i(\ite_i)/r_i$. Also, by Observation \ref{clm:valloc}, valuation of $\ite_j$ for agent $i$ is at most $r_j \cdot \valu_i(\ite_i)/r_i$. Thus,
\begin{align*}
\E_{\ite \sim \distribution_j} \big[\valu_i(\allocation_j \setminus \{\ite\})\big] &= \frac{2}{3} \cdot \big(\valu_i(\ite_j) + \valu_i(\ite'_j) + \valu_i(\ite''_j) \big) & \text{Claim \ref{clm:exsum}.}
\\
&\le \frac{2}{3} \cdot \big(\valu_i(\ite_j) + 2\valu_i(\ite_i)/r_i\big) & \text{Observation \ref{obs:vremaining}.} \\
&\le \frac{2}{3} \cdot \big(r_j \cdot \valu_i(\ite_i)/r_i + 2\valu_i(\ite_i)/r_i\big) & \text{Observation \ref{clm:valloc}.} \\
&\le \frac{2}{3} \cdot \big(\valu_i(\ite_i) + 2\valu_i(\ite_i)/r_i\big) & \text{$r_j <r_i$.} \\
&= \frac{2+4/r_i}{3} \cdot \valu_i(\ite_i)  \\
&<  \frac{4}{3} \cdot \valu_i(\ite_i) \le  \frac{4}{3} \cdot \valu_i(\allocation_i) \,. & \text{$r_i > 2$.} \\
\end{align*}
Therefore the allocation is $(3/4)$-$\EFR$.
\end{itemize}
\end{proof}

Now we show that by the end of Step 2, the allocation is $(2/\varphi)$-$\EFR$ with respect to the agents in $\group_3$.

\begin{mclaim}
By the end of Step 2, the allocation is $(2/\varphi)$-$\EFR$ with respect to the agents in $\group_3$.
\end{mclaim}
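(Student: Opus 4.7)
Fix $i \in \group_3$; the plan is to show $\E_{\ite \sim \distribution_j}[\valu_i(\allocation_j \setminus \{\ite\})] \leq \tfrac{\varphi}{2}\valu_i(\allocation_i)$ for every other agent $j$, which is equivalent to $(2/\varphi)$-$\efr$ because $2/\varphi = \sqrt{3}-1$. I would case-split on the group of $j$: the case $j \in \group_1$ is immediate since $|\allocation_j|=1$. The non-trivial arguments rely on three ingredients already in hand -- Observation \ref{clm:valloc} for $\valu_i(\ite_j)$, Observation \ref{obs:vremaining} for items picked in Step 2, and the pick-order invariants of Step 2, in particular the fact that both $\group_3$ passes are executed before the $\group_2$ pass.

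For $j \in \group_2$, the order of Step 2 guarantees that $\ite'_j$ was still unallocated when $i$ made both of his Step-2 picks, so $\valu_i(\ite'_j) \leq \valu_i(\ite''_i)$, while $r_j \leq \varphi$ and Observation \ref{clm:valloc} give $\valu_i(\ite_j) \leq \varphi\valu_i(\ite_i)$; averaging over the two items in $\allocation_j$ and using $\tfrac{1}{2} \leq \tfrac{\varphi}{2}$ closes the case. The main case is $j \in \group_3$, which I would split on the relative position of $i$ and $j$ in $\mathcal{O}$. Crucially, the inequality $\valu_i(\ite'_i) \geq \valu_i(\ite''_j)$ holds regardless of the order, because at the moment $i$ makes his first-round pick no agent has yet entered round $2$, so the item $j$ will eventually take as $\ite''_j$ is still on the table. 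When $i$ precedes $j$, the standard picking argument also gives $\valu_i(\ite'_i) \geq \valu_i(\ite'_j)$ and $\valu_i(\ite''_i) \geq \valu_i(\ite''_j)$; combined with $\valu_i(\ite_j) \leq 2\valu_i(\ite_i)$ from Observation \ref{clm:valloc} (using $r_j \leq 2$, $r_i \geq 1$), this yields $\E \leq \tfrac{4}{3}\valu_i(\ite_i) + \tfrac{2}{3}(\valu_i(\ite'_i)+\valu_i(\ite''_i))$. When $j$ precedes $i$, there is no envy edge from $i$ to $j$, so $\valu_i(\ite_j) \leq \valu_i(\ite_i)$, Observation \ref{obs:vremaining} gives $\valu_i(\ite'_j) \leq \valu_i(\ite_i)/r_i \leq \valu_i(\ite_i)$, and the order-independent bound gives $\valu_i(\ite''_j) \leq \valu_i(\ite'_i)$, yielding $\E \leq \tfrac{4}{3}\valu_i(\ite_i) + \tfrac{2}{3}\valu_i(\ite'_i)$.

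The main obstacle is a single arithmetic check shared by every subcase: the bound reduces to verifying that the coefficients $\tfrac{4}{3}$ and $\tfrac{2}{3}$ are both at most $\tfrac{\varphi}{2}$, the binding one being $\tfrac{4}{3} \leq \tfrac{\varphi}{2}$, equivalently $3\sqrt{3} \geq 5$, i.e.\ $27 \geq 25$, which holds but with a narrow margin and is exactly what motivates the choice $\varphi = \sqrt{3}+1$. A more delicate point I would flag is the ``$j$ before $i$'' subcase when $r_i$ is close to its minimum value $1$: there, the seemingly natural bound $\valu_i(\ite''_j) \leq \valu_i(\ite_i)/r_i$ collapses to $\valu_i(\ite_i)$ and is too weak to beat $\tfrac{\varphi}{2}$, so one must instead invoke the order-independent bound $\valu_i(\ite''_j) \leq \valu_i(\ite'_i)$, which carries no $r_i$ factor and keeps the estimate tight.
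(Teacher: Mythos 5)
Your proof is correct and takes essentially the same route as the paper's: the same three-way case split on $j$'s group, the same ingredients (Observation \ref{clm:valloc}, Observation \ref{obs:vremaining}, the pick-order invariants of Step 2, and crucially the order-independent bound $\valu_i(\ite'_i) \ge \valu_i(\ite''_j)$), and the same binding arithmetic $\tfrac{4}{3} \le \tfrac{\varphi}{2}$. The only cosmetic difference is that in the $\group_3$-versus-$\group_3$ case you split on the relative position of $i$ and $j$ in $\mathcal{O}$, while the paper splits on whether $\valu_i(\ite_i) < \valu_i(\ite_j)$ (which forces $i$ to precede $j$); the two partitions yield identical estimates, and your flagged subtlety (that one must use $\valu_i(\ite''_j) \le \valu_i(\ite'_i)$ rather than the weaker $\valu_i(\ite''_j) \le \valu_i(\ite_i)$) is exactly the step the paper relies on as well.
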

\begin{proof}
Let $i$ be an agent in $\group_3$, we show that for every other agent $j$, we have 
$$\E_{\ite \sim \distribution_j} \big[\valu_i(\allocation_j \setminus \{\ite\})\big] \le \varphi/2 \cdot\valu_i(\allocation_i) \,,$$

therefore the allocation is $(2/\varphi)$-$\efr$.
\begin{itemize}
\item For an agent $j$ in $\group_1$, the claim clearly holds since this agent has only one allocated item.

\item Consider an agent $j \in \group_2$. Agent $i$ receives his second item before agent $j$ in Step 2 of our algorithm. Thus, we have
\begin{align}
\label{eq:g32s}
\valu_i(\ite'_i) \ge \valu_i(\ite'_j) \,.
\end{align} 
Also, by Observation \ref{clm:valloc}, valuation of $\ite_j$ for agent $i$ is at most $r_j \cdot \valu_i(\ite_i)$. Thus,
\begin{align*}
\E_{\ite \sim \distribution_j} \big[\valu_i(\allocation_j \setminus \{\ite\})\big] &=
\frac{\valu_i(\ite_j) + \valu_i(\ite'_j)}{2} & \text{Claim \ref{clm:exsum}.} \\
&\le \frac{\valu_i(\ite_j) + \valu_i(\ite'_i)}{2} & \text{By (\ref{eq:g32s}).}\\
&\le \frac{r_j \valu_i(\ite_i) + \valu_i(\ite'_i)}{2} & \text{Observation \ref{clm:valloc}.} \\
&\le \frac{\varphi \valu_i(\ite_i) + \valu_i(\ite'_i)}{2} & \text{ $r_j \le \varphi$.} \\
&\le \frac{\varphi}{2} \cdot \big(\valu_i(\ite_i)+\valu_i(\ite'_i)\big) & \text{Since $\varphi > 1$.} \\
&\le \frac{\varphi}{2} \cdot \valu_i(\allocation_i) \,.
\end{align*}
Therefore the allocation is $(2/\varphi)$-$\EFR$.

\item The remaining case is when agent $j$ is in $\group_3$. Consider the case that $\valu_i(\ite_i) < \valu_i(\ite_j)$. In this case the position of agent $i$ in the topological order $\mathcal{O}$ is before agent $j$. Therefore, in Step 2 of our algorithm, agent $i$ receives his second and third items before agent $j$, and we have the followings.
\begin{align}
\label{eq:g33s1}
\valu_i(\ite'_i) \ge \valu_i(\ite'_j) \,, 
\end{align}
and
\begin{align}
\label{eq:g33s2}
\valu_i(\ite''_i) \ge \valu_i(\ite''_j) \,.
\end{align}
Also, by Observation \ref{clm:valloc}, valuation of $\ite_j$ for agent $i$ is at most $r_j \cdot \valu_i(\ite_i)$. Thus,
\begin{align*}
\E_{\ite \sim \distribution_j} \big[\valu_i(\allocation_j \setminus \{\ite\})\big] &= \frac{2}{3} \cdot \big(\valu_i(\ite_j) + \valu_i(\ite'_j) + \valu_i(\ite''_j) \big) & \text{Claim \ref{clm:exsum}.}
\\
&\le  \frac{2}{3} \cdot \big(\valu_i(\ite_j) + \valu_i(\ite'_i) + \valu_i(\ite''_i) \big) & \text{By (\ref{eq:g33s1}) and (\ref{eq:g33s2}).} \\
&\le  \frac{2}{3} \cdot \big(r_j \valu_i(\ite_i) + \valu_i(\ite'_i) + \valu_i(\ite''_i) \big) & \text{Observation \ref{clm:valloc}.} \\
&\le \frac{2}{3} \cdot \big(2 \valu_i(\ite_i) + \valu_i(\ite'_i) + \valu_i(\ite''_i) \big) & \text{ $r_j \le 2$.} \\
& \le \frac{4}{3} \cdot \big(\valu_i(\ite_i) + \valu_i(\ite'_i) + \valu_i(\ite''_i) \big) \\
& = \frac{4}{3} \cdot \valu_i(\allocation_i) \,.
\end{align*}
Thus, the allocation is $(3/4)$-$\EFR$. Since $2/\varphi < 3/4$, the allocation is also $(2/\varphi)$-$\EFR$.

The other case is when $\valu_i(\ite_i) \ge \valu_i(\ite_j)$. In this case  agent $i$ receives $\ite'_i$ prior to when agent $j$ receives $\ite''_j$, and we have
\begin{align}
\label{eq:g33m}
\valu_i(\ite'_i) \ge \valu_i(\ite''_j)
\end{align}
Also, by Observation \ref{obs:vremaining}, valuation of $\ite'_j$ for agent $i$ is at most $\valu_i(\ite_i)$. Thus,
\begin{align*}
\E_{\ite \sim \distribution_j} \big[\valu_i(\allocation_j \setminus \{\ite\})\big] &= \frac{2}{3} \cdot \big(\valu_i(\ite_j) + \valu_i(\ite'_j) + \valu_i(\ite''_j) \big) & \text{Claim \ref{clm:exsum}.}
\\
&\le \frac{2}{3} \cdot \big(\valu_i(\ite_i) + \valu_i(\ite'_j) + \valu_i(\ite''_j) \big) & \text{$\valu_i(\ite_i) \ge \valu_i(\ite_j)$.} \\
&\le \frac{2}{3} \cdot \big(\valu_i(\ite_i) + \valu_i(\ite'_j) + \valu_i(\ite'_i) \big) & \text{By (\ref{eq:g33m}).}\\
&\le \frac{2}{3} \cdot \big(\valu_i(\ite_i) + \valu_i(\ite_i) + \valu_i(\ite'_i) \big) & \text{Observation \ref{obs:vremaining}.} \\
&\le \frac{4}{3} \cdot \big(\valu_i(\ite_i) + \valu_i(\ite'_i) \big)\\
&\le \frac{4}{3} \cdot \valu_i(\allocation_i) \,,
\end{align*}
and the allocation is $(3/4)$-$\EFR$ as well as  $(2/\varphi)$-$\EFR$. 
\end{itemize}
\end{proof}
\end{document}